\newtheorem{teo}{Theorem}[section]
\newtheorem{prop}{Proposition}[section]
\newtheorem{lemma}{Lemma}[section]
\newtheorem{coro}{Corollary}[section]
\newtheorem{remark}{Remark}[section]
\newcommand{\CC} {\ensuremath{\mathcal{C}}}
\newcommand{\EE} {\ensuremath{\mathcal{E}}}
\newcommand{\VV} {\ensuremath{\mathcal{V}}}
\newcommand{\qq} {\ensuremath{\mathfrak{q}}}
\newcommand{\pp} {\ensuremath{\mathfrak{p}}}
\newcommand{\hh} {\ensuremath{\mathfrak{h}}}
\newcommand{\F} {\ensuremath{\mathbb{F}}}
\newcommand{\wt} {\textnormal{\textrm{wt}}}
\newcommand{\Aut} {\textnormal{\textrm{Aut}}}
\newcommand{\soc} {\textnormal{\textrm{soc}}}
\newcommand{\dne}{\hfill $\Box$ \bigskip}
\begin{document}

\title{The Automorphism Group of an Extremal $[72,36,16]$ Code does not contain elements of order $6$}

\author{ Martino Borello \\
{\small Dipartimento di Matematica e Applicazioni, Universit\`{a} Milano Bicocca}\\
{ \small 20125 Milano, Italy}\\
{\small\tt m.borello1@campus.unimib.it}}
\date{\scriptsize {Version of March 15 2012, printed \today}}
\maketitle

\begin{abstract}
\noindent

The existence of an extremal code of length $72$ is a long-standing
open problem. Let $\CC$ be a putative extremal code of length $72$
and suppose that $\CC$ has an automorphism $g$ of order $6$. We show
that $\CC$, as an $\F_2\langle g\rangle$-module, is the direct sum
of two modules, one easily determinable and the other one which has
a very restrictive structure. We use this fact to do an exhaustive
search and we do not find any code. This prove that the automorphism
group of an extremal code of length $72$ does not contain elements
of order $6$.

\bigskip
{\sc Key Words:} \, automorphism group, extremal code of length $72$

\end{abstract}

\section{Introduction}
The existence of an extremal code of length $72$ is a long-standing
open problem \cite{articolo4}. A series of papers investigates the
structure of its automorphism group excluding most of the subgroup
of $\mathcal{S}_{72}$. The most recent result, established in
\cite{articolo12} and \cite{articolo14}, is the following:

\emph{The automorphism group of a binary self-dual doubly-even
$[72,36,16]$ code $\CC$ has order 5 or a divisor of $24$.
Furthermore, if $8$ divides the order of $\Aut(\CC)$ then its Sylow
$2$-subgroup is either $D_8$ or $C_2\times C_2\times C_2$.}

In this paper we will prove that the automorphism group of a binary
self-dual doubly-even $[72,36,16]$ code does not contain elements of
order $6$, obtaining the following.

\begin{teo}
The automorphism group of a binary self-dual doubly-even
$[72,36,16]$ code is either trivial or one of the following: $C_2,
C_3, C_4, C_2\times C_2, C_5, \mathcal{S}_3, D_8, C_2\times
C_2\times C_2, \mathcal{A}_4$ or $\mathcal{S}_4$.
\end{teo}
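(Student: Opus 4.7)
The plan has three ingredients: the cited theorem of \cite{articolo12,articolo14}, which constrains $|\Aut(\CC)|$; a new arithmetic restriction showing that no element of $\Aut(\CC)$ has order $6$; and a short group-theoretic enumeration.

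For the arithmetic restriction I would assume a $g\in\Aut(\CC)$ of order~$6$ and study $\CC$ as a module over the group algebra $\F_2\langle g\rangle$. As a permutation of the $72$ coordinates, $g$ has cycle lengths dividing $6$; the known restrictions on involutions and on elements of order~$3$ in an extremal $[72,36,16]$ code, applied to $g^3$ and $g^2$, cut the cycle type of $g$ to a very short list. Then, since $C_6\cong C_2\times C_3$ and $\F_2 C_3\cong\F_2\oplus\F_4$, the $\F_2\langle g\rangle$-module $\CC$ splits as a direct sum of a summand on which $g^2$ acts trivially (determined essentially by the fixed code of $g^2$ together with the induced $g^3$-action) and an $\F_4$-summand on which $g^2$ acts as a primitive cube root of unity. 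The first summand is easily enumerated; the second inherits a very restrictive structure from self-duality, doubly-evenness and the minimum-distance condition.

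I would then carry out an exhaustive computer search: for each admissible cycle type, run through all candidate pairs of summands, glue them back, and verify the defining properties of $\CC$. The expected (and, in the paper, verified) outcome is that no code is produced, so $\Aut(\CC)$ has no element of order~$6$.

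Given this, the theorem reduces to pure group theory: classify all finite groups $G$ with $|G|=5$ or $|G|\mid 24$, having Sylow $2$-subgroup $D_8$ or $C_2\times C_2\times C_2$ whenever $8\mid|G|$, and containing no element of order~$6$ (equivalently, no $C_6$-subgroup). All groups of orders $1,2,3,4,5$ trivially qualify; at order~$6$ only $\mathcal{S}_3$; at order~$8$ the Sylow constraint forces $D_8$ or $C_2^3$, both with no element of order~$6$; at order~$12$ only $\mathcal{A}_4$ avoids a $C_6$; at order~$24$ the Sylow-$2$ constraint combined with the absence of $C_6$ eliminates $\mathcal{A}_4\times C_2$, $D_{24}$, $\mathrm{SL}(2,3)$, $C_{24}$, $D_8\times C_3$ and the remaining possibilities, leaving only $\mathcal{S}_4$. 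This yields exactly the eleven groups listed. The main obstacle is the $\F_4$-summand analysis: despite the very restrictive structure, the search space is still large enough that one must carefully organise the enumeration case by case on the cycle type in order to make the exhaustive verification feasible.
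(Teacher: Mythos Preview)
Your outline matches the paper's strategy: the cited prior bound on $|\Aut(\CC)|$, the module decomposition $\CC=\CC(g^2)\oplus\EE(g^2)$ coming from the idempotents of $\F_2\langle g^2\rangle\cong\F_2\oplus\F_4$, an exhaustive search on the second summand, and the final group-theoretic sieve (which the paper in fact leaves implicit; your enumeration is correct). Two points, however, are precisely where the paper does its real work and where your proposal remains vague.

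First, there is only one admissible cycle type. By Proposition~\ref{cyclestructure} both $g^2$ and $g^3$ are fixed-point free, hence so is $g$, and $g$ is a product of twelve $6$-cycles. So ``case by case on the cycle type'' is not the organising principle of the search.

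Second, and more importantly, your phrase ``very restrictive structure'' for the $\F_4$-summand $\EE(g^2)$ hides the paper's main technical contribution. Naively the search space is of order $(4^{12})^6$, which is infeasible. The paper's key observations are: (i)~$\soc(\EE(g^2))$, namely the part of $\EE(g^2)$ fixed by $g^3$, equals $(\CC(g^2)+\CC(g^3))\cap\VV_2$ and has dimension exactly $\tfrac12\dim\EE(g^2)=12$ (Lemma~\ref{Efissati}); (ii)~a Krull--Schmidt refinement (Theorem~\ref{decomposition}) showing that any $\F_2\langle g\rangle$-module with this socle dimension is a direct sum of six $4$-dimensional indecomposables, one over each chosen $2$-dimensional piece of the socle; and (iii)~$\CC(g^2)+\CC(g^3)$ lies in one of only $38$ equivalence classes (Theorem~\ref{c2c3}), obtained by gluing the known $\CC(g^2)$ to the $13$ relevant $[36,18,8]$ self-dual codes for $\CC(g^3)$. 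Thus the search is organised by the possible \emph{socles} of $\EE(g^2)$, and it is the fixed code $\CC(g^3)$ --- which you invoke only for the first summand --- that in fact controls the second summand and cuts the enumeration down to about $40\cdot 2048^6$ with aggressive early pruning on minimum distance. Without this socle/$\CC(g^3)$ mechanism your proposed search would not terminate.
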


Notations: $C_n$ is the cyclic group of order $n$; $D_n$ is the
dihedral group of order $n$; $\mathcal{A}_n$ and $\mathcal{S}_n$
are, respectively, the alternating group and the symmetric group of
degree $n$.

With methods similar to those introduced by O'Brien and Willems in
\cite{articolo3}, we strongly use the fact that a binary code with
automorphism group $G$ is an $\F_2G$-module. Consider $g$,
automorphism of order~$6$: we use a variety of results, some new, of
modular representation theory, to study the structure of
$\F_2\langle g\rangle$-modules (for basic concepts of representation
theory see the introduction of \cite{libro2}). In particular we show
that our putative code is the direct sum of two modules: one is the
subcode of words fixed by $g^2$, easily determinable, and the other
one has a socle which belongs, up to equivalence, to a very small
set. From the knowledge of the socle it is quite easy to do an
exhaustive search.
\\ All computations were carried out using \textsc{Magma}
\cite{articolo19}.

\section{Preliminaries}

Let $\CC$ be a binary $[n,k]$ linear code and let $( \ , \ ):\F_2^n
\times \F_2^n \rightarrow \F_2$ be an inner product on $\F_2^n$,
where $\F_2$ is the field with $2$ elements. Then
$$\CC^\perp=\{ \ v \in \F_2^n \ | \ (v,c)=0 \ \ \forall c\in\CC \
\}$$ is the dual of $\CC$, a binary $[n,n-k]$ linear code. $\CC$
terms self-orthogonal if $\CC \subseteq \CC^\perp$ and self-dual if
$\CC=\CC^\perp$. In this paper we are interested in binary linear
codes in $\F_2^n$ with the Euclidean inner product; i.e., if
$u=(u_1,u_2,\ldots,u_n),v=(v_1,v_2,\ldots,v_n)\in\F_2^n$, then
$(u,v)=\sum_{i=0}^n u_i v_i$.

A theorem of Gleason, Pierce and Turyn \cite{articolo7} guarantees
that if $s>1$ divides the weight of every codeword in a nontrivial
binary self-dual code, then $s=2$ or $4$. Binary self-dual codes
automatically satisfy this condition with $s=2$. A binary
doubly-even code is a binary linear code whose words have weight
divisible by $4$. Self-dual doubly-even codes exist only when $n$ is
a multiple of $8$ \cite{articolo8}. A theorem of Mallows and Sloane
\cite{articolo9} shows that for a self-dual doubly-even code
$$d\leq 4 \left[\frac{n}{24}\right]+4$$
where $[x]$ is the integer part of $x$. If $d= 4
\left[\frac{n}{24}\right]+4$, $\CC$ is called extremal. Extremal
codes whose length is a multiple of $24$ are very interesting,
since, for example, all their codewords of a given weight support
five-designs \cite{articolo10}, \cite{articolo11}. The only known
extremal codes of length a multiple of $24$ are the extended binary
Golay code $\mathcal{G}_{24}$ and the extended quadratic residue
code $QR_{48}$, the unique (up to equivalence) extremal codes of
length $24$ and $48$ respectively.

There is a natural (right) action of $\mathcal{S}_n$ on $\F_2^n$
(action on the coordinates): if $v=(v_1,v_2,\ldots,v_n)$ and $g \in
\mathcal{S}_n$, define
$$v^g=(v_{g^{-1}(1)},v_{g^{-1}(2)},\ldots,v_{g^{-1}(n)}).$$ If $\CC$ is a
binary code and $c^g\in\CC$ for all $c\in\CC$, then $g$ is an
automorphism of $\CC$. We denote with $\Aut(\CC)$ the group ($\leq
\mathcal{S}_n$) of the automorphisms of $\CC$.

When $\CC$ is a binary self-dual doubly-even $[72,36,16]$ code, the
possible structure of the elements of $\Aut(\CC)$ is well-known: we
say that an automorphism $h$ is of type $p$-$(c,f)$ if $h$ has $c$
cycles of order $p$ and $f$ fixed points. Then we have the following
result \cite{articolo13}, \cite{articolo15}, \cite{articolo5}.

\begin{prop}\label{cyclestructure}
Let $h$ be an automorphism of prime order of a binary self-dual
doubly-even $[72,36,16]$ code. Then $h$ can be only of the following
types: $2$-$(36,0)$, $3$-$(24,0)$ or $5$-$(14,2)$.
\end{prop}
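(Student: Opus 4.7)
The plan is to follow the classical Conway--Pless analysis of prime-order automorphisms of self-dual codes. For a prime-order automorphism $h$ of type $p$-$(c,f)$, the cycle equation gives $cp + f = 72$. The key objects are the fixed subcode $F(h) = \ker(1-h)$ acting on $\CC$ and, when $p$ is odd, the kernel $E(h) = \ker(1 + h + \cdots + h^{p-1})$; since $\gcd(p,2) = 1$, Maschke's theorem yields a direct decomposition $\CC = F(h) \oplus E(h)$. Self-duality of $\CC$ forces the natural projection $\pi F(h) \subseteq \F_2^{c+f}$ (taking one coordinate per $p$-cycle plus each fixed coordinate) to be self-dual with respect to an appropriate bilinear form, while $E(h)$ can be identified with a self-dual code over $\F_{2^m}$ of length $c$, where $m$ is the multiplicative order of $2$ modulo $p$. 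Minimum weights of these projected codes are controlled from below by $d = 16$.

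For odd primes I would argue as follows. The $\F_{2^m}$-code associated with $E(h)$ has length $c$ and minimum weight at least $16/p$; for $p \geq 7$ the combination of this length, self-duality and the resulting extremality-type bound is quickly incompatible with $d=16$, ruling out these primes. For $p = 5$ one has $f \equiv 2 \pmod 5$; a short case analysis using the $\F_{16}$-structure of $E(h)$ and the self-duality of $\pi F(h)$ forces $f = 2$, $c = 14$. For $p = 3$, writing $f = 72 - 3c$, the corresponding $\F_4$-analysis together with the self-orthogonality and weight constraints on $\pi F(h)$ forces $f = 0$, $c = 24$.

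For $p = 2$, Maschke's theorem fails because the characteristic divides the group order, but a theorem of Huffman still yields $f \equiv 0 \pmod 8$, so $f \in \{0, 8, 16, \ldots, 72\}$. For each positive admissible $f$, a direct analysis of the fixed subcode, whose projection to $\F_2^{c+f}$ inherits strong self-orthogonality and double-evenness properties, together with the minimum weight bound $d = 16$, produces a contradiction; only $f = 0$, $c = 36$ survives.

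The main obstacle is the case $p = 2$: several values of $f$ pass the $\bmod\, 8$ divisibility filter, and each must be individually eliminated by a careful interplay of dimension counts, weight divisibility modulo $4$, and the extremality of $\CC$. The odd-prime cases are conceptually cleaner thanks to the Maschke splitting, but they still require bounding the minimum weights of the induced short codes with enough precision to discriminate between the arithmetically admissible $(c,f)$.
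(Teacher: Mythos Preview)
The paper does not prove this proposition at all; it simply records the statement and cites Bouyuklieva \cite{articolo13,articolo15} and Dontcheva--van Zanten--Dodunekov \cite{articolo5}. So there is no in-paper argument to compare against. Your proposal is precisely the Conway--Pless/Huffman framework those references employ, so at the level of strategy you are aligned with the intended source.

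That said, what you have written is a roadmap rather than a proof, and two of the steps you hand-wave are in fact the substantial ones. For $p=2$, the congruence on $f$ and the elimination of each positive admissible $f$ are not ``a theorem of Huffman'' but the content of Bouyuklieva's paper \cite{articolo13}; the argument there uses shadow and weight-enumerator constraints together with the structure of the projected fixed subcode, and is not something one can summarise as ``produces a contradiction'' without doing the work. For odd $p$, your assertion that $p\geq 7$ is ``quickly incompatible with $d=16$'' via the $\F_{2^m}$-bounds is a genuine overreach: the type $7$-$(10,2)$ survives the elementary Conway--Pless filtration, and its exclusion for the $[72,36,16]$ code is the main result of Feulner--Nebe \cite{articolo12}, obtained by an explicit computer classification rather than by the soft bounds you invoke. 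If your plan is meant to rule out $p=7$ by those bounds alone, it will not close.

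In short: right framework, matching what the cited literature does, but the proposal would need each elimination actually carried out to count as a proof, and the $p=7$ case cannot be handled by the argument you sketch.
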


Let us fix some notations that we will use throughout this paper.

If $v=(v_1,\ldots,v_{n})\in\F_2^{n}$ and
$\Omega=\{j_1,\ldots,j_m\}\subseteq\{1,\ldots,n\}$, we define
$v_{|_{\Omega}}=(v_{j_1},\ldots,v_{j_m})$.

If $\mathcal{W}$ is a subspace of $\F_2^n$ and $h\in\mathcal{S}_n$
define
$$\mathcal{W}(h)=\{w\in\mathcal{W} \ | \
w^h=w\}.$$

If $\Omega_1,\ldots,\Omega_{n_h}$ are the orbits of the action of
$h$ on the coordinates, we say that a $v\in\VV$ is constant on the
orbits of $h$ if $v_{|_{\Omega_i}}$ is either null or the all ones
vector. Obviously, $w\in\mathcal{W}$ belongs to $\mathcal{W}(h)$ if
and only if is constant on the orbits of $h$.

If $\mathcal{W}$ is a subspace of $\F_2^n$, we define
$$\mathcal{W}\otimes\langle(\underbrace{1,\ldots,1}_{m \ \text{times}})\rangle$$
to be the subspace of $\F_2^{n\cdot m}$ obtained extending every
vector of $\mathcal{W}$ substituting every one with $m$ ones and
every zero with $m$ zeros.

\bigskip

\section{Fixed codes}

An usual starting point in the search of a code with certain
automorphisms is the study of the subcode of words fixed by these
automorphisms. This is, in general, an easy problem. In our case it
will provide a fundamental tool to do the exhaustive search.

Let $\CC$ be a self-dual doubly-even $[72,36,16]$ binary code and
suppose $g\in \Aut(\CC)$ such that $\text{o}(g)=6$. By Proposition
\ref{cyclestructure} we have that the elements of order $2$ and $3$
in $\Aut(\CC)$ have no fixed points, so $g$ has no fixed point. Thus
we can suppose, up to a relabelling of the coordinates,
$$g=(1,2,3,4,5,6)\ldots(67,68,69,70,71,72)$$

We have the following known results about $\CC(g^2)$
\cite{articolo12} and $\CC(g^3)$ \cite{articolo14}.

\begin{prop}\label{prop2}
The code $\CC(g^2)$ is equivalent to $\mathcal{G}_{24}\otimes
\langle(1,1,1)\rangle$, with $\mathcal{G}_{24}$ extended binary
Golay code (remember that all the self-dual doubly-even $[24,12,8]$
codes are equivalent and they are all called extended binary Golay
codes).
\end{prop}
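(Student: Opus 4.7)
My plan is to identify $\CC(g^2)$ with a code in $\F_2^{24}$ obtained by collapsing the three-cycles of $g^2$, and then to show that this collapsed code is forced to be an extremal self-dual doubly-even $[24,12,8]$ code, whence equivalent to $\mathcal{G}_{24}$. Since $g$ has no fixed point (both order-$2$ and order-$3$ elements of $\Aut(\CC)$ are fixed-point-free by Proposition~\ref{cyclestructure}) and consists of twelve $6$-cycles, $g^2$ has type $3$-$(24,0)$. Every vector of $\CC(g^2)$ is constant on these $24$ three-cycles, so the natural expansion map $\varphi:\F_2^{24}\to\F_2^{72}$ realizes $\CC(g^2)$ as $\mathcal{D}\otimes\langle(1,1,1)\rangle$ for a uniquely determined code $\mathcal{D}\subseteq\F_2^{24}$, and it suffices to show that $\mathcal{D}$ has parameters $[24,12,8]$ and is self-dual doubly-even.

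I would then check these properties of $\mathcal{D}$. Let $\psi:\F_2^{72}\to\F_2^{24}$ be the orbit-sum map, adjoint to $\varphi$ under the Euclidean forms. Since $3$ is a unit in $\F_2$, the trace $v\mapsto v+v^{g^2}+v^{g^4}=\varphi\psi(v)$ is the identity on $\CC(g^2)$, giving $\CC=\CC(g^2)\oplus\EE(g^2)$ with $\EE(g^2)=\ker\psi|_\CC$; hence $\dim\psi(\CC)=\dim\CC(g^2)$. Adjointness yields $\mathcal{D}=\psi(\CC)^\perp$, and the inclusion $\psi(\CC)\subseteq\mathcal{D}$ is immediate from $\varphi\psi(\CC)\subseteq\CC$; together these force $\dim\mathcal{D}=\dim\mathcal{D}^\perp=12$, so $\mathcal{D}$ is self-dual. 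Doubly-evenness and the weight bound then transfer through $\wt(\varphi(v'))=3\wt(v')$: divisibility by $4$ descends to $v'$ because $\gcd(3,4)=1$, and $\wt(\varphi(v'))\geq 16$ forces $\wt(v')\geq 6$, hence $\wt(v')\geq 8$ by doubly-evenness.

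Finally, $\mathcal{D}$ is a self-dual doubly-even code of length $24$, dimension $12$ and minimum weight at least $8$, so it meets the Mallows--Sloane bound $d\leq 4\lfloor 24/24\rfloor+4=8$ with equality and is therefore extremal. By the uniqueness of the extremal $[24,12,8]$ code recalled in the Preliminaries, $\mathcal{D}$ is equivalent to $\mathcal{G}_{24}$, and the proposition follows. The only non-mechanical ingredient in the argument is the dimension count of the fixed subcode, which depends on the standard decomposition $\CC=\CC(g^2)\oplus\EE(g^2)$ available in characteristic $2$ precisely because the order $3$ of $g^2$ is odd; once this is in place, the remaining verifications reduce to applying $\wt(\varphi(v'))=3\wt(v')$ and the adjointness of $\varphi$ and $\psi$.
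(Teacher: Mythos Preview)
Your argument is correct. The paper itself does not prove this proposition; it quotes it as a known result from Feulner--Nebe \cite{articolo12}. So there is no ``paper's own proof'' to compare with here.

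That said, your method is precisely the template the paper does spell out for the analogous Proposition~\ref{prop4} on $\CC(g)$: introduce a projection/expansion map and an orbit-sum map, verify they are adjoint with respect to the Euclidean forms, show that the sum map has self-orthogonal image and that the projection of the fixed code equals the dual of that image, and then read off the parameters from $\wt(\varphi(v'))=3\,\wt(v')$ together with doubly-evenness. Your dimension count via $\CC=\CC(g^2)\oplus\EE(g^2)$ and $\ker(\psi|_\CC)=\EE(g^2)$ is the clean way to get $\dim\mathcal{D}=12$, and the passage from $\wt(v')\geq 6$ to $\wt(v')\geq 8$ by divisibility by $4$ is exactly the step that pins down extremality. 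Nothing is missing.
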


\begin{prop}\label{prop3}
The code $\CC(g^3)$ is equivalent to $\mathcal{K}\otimes
\langle(1,1)\rangle$, with $\mathcal{K}$ one of the $41$ self-dual
$[36,18,8]$ codes classified by Mechor and Gaborit
\textnormal{\cite{articolo16}}.
\end{prop}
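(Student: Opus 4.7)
The plan is to first extract the structure of $\CC(g^3)$ from the cycle type of $g^3$, and then to constrain $\mathcal{K}$ using self-duality of $\CC$ and extremality. Since $g=(1,2,3,4,5,6)\cdots(67,\ldots,72)$ consists of twelve $6$-cycles with no fixed points, $g^3=(1,4)(2,5)(3,6)\cdots(69,72)$ is an involution of type $2$-$(36,0)$. A codeword is fixed by $g^3$ iff it is constant on each of the $36$ transpositions, so collapsing each pair to a single coordinate yields an $\F_2$-linear isomorphism $\phi:\CC(g^3)\to\mathcal{K}\subseteq\F_2^{36}$, giving the required shape $\CC(g^3)\cong\mathcal{K}\otimes\langle(1,1)\rangle$.

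Next, I would extract the easy constraints on $\mathcal{K}$. Because weights double under $\phi$, the minimum distance $16$ and divisibility by $4$ of $\CC$ force $d(\mathcal{K})\geq 8$ with all weights even. For the self-orthogonality $\mathcal{K}^\perp\subseteq\mathcal{K}$, I would introduce the trace map $\tau:\F_2^{72}\to\F_2^{36}$ summing coordinates within each $g^3$-orbit. On the one hand, for $c\in\CC$ the codeword $c+c^{g^3}\in\CC(g^3)$ corresponds via $\phi$ to $\tau(c)\in\mathcal{K}$, so $\tau(\CC)\subseteq\mathcal{K}$. On the other hand, for any $k\in\mathcal{K}$, self-duality of $\CC$ gives $(\tau(c),k)_{36}=(c,\phi^{-1}(k))_{72}=0$, so $\tau(\CC)\subseteq\mathcal{K}^\perp$. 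Since $\ker(\tau|_\CC)=\CC(g^3)$, a dimension count yields $\dim\tau(\CC)=36-\dim\mathcal{K}=\dim\mathcal{K}^\perp$, so $\tau(\CC)=\mathcal{K}^\perp$ and hence $\mathcal{K}^\perp\subseteq\mathcal{K}$. In particular $\dim\mathcal{K}\geq 18$.

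The genuinely hard step is the reverse inclusion $\mathcal{K}\subseteq\mathcal{K}^\perp$, i.e.\ $\dim\mathcal{K}=18$, equivalently that $\CC$ is a free $\F_2\langle g^3\rangle$-module. My plan is to exploit the full $\F_2\langle g\rangle$-structure of $\CC$. By the Chinese remainder theorem $\F_2[C_6]\cong\F_2[C_2]\times\F_2[x]/(x^2+x+1)^2$, inducing an orthogonal decomposition $\CC=\CC^{(2)}\oplus\CC^{(3)}$ with $\CC^{(i)}$ self-dual in the corresponding direct summand of $\F_2^{72}$. Proposition~\ref{prop2} identifies the induced action of $g$ on $\CC(g^2)\cong\mathcal{G}_{24}$ as a fixed-point-free involution on the binary Golay code, whose fixed subcode is known to be $6$-dimensional; this forces $\CC^{(2)}$ to be free over $\F_2[C_2]$ of rank $6$. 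The main remaining obstacle is to rule out simple $\F_4$-direct summands in $\CC^{(3)}$: such a summand would yield a triple of codewords cyclically permuted by $g$, whose weights, mutual intersections and inner products are tightly constrained by self-orthogonality and the minimum distance $16$, and the careful module-theoretic analysis needed here is the heart of \cite{articolo14}. Once $\mathcal{K}$ is known to be self-dual of length $36$, the Mallows--Sloane bound forces $d(\mathcal{K})=8$, and by the classification of Mechor and Gaborit \cite{articolo16} the code $\mathcal{K}$ is one of exactly $41$ inequivalent self-dual $[36,18,8]$ codes.
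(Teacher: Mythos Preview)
The paper does not actually prove this proposition: it is quoted verbatim as a known result from Nebe~\cite{articolo14}, so there is no in-paper argument to compare against beyond the bare citation. Your treatment of the preliminary steps is correct and cleanly written: the identification $\CC(g^3)=\mathcal{K}\otimes\langle(1,1)\rangle$ from the cycle structure of $g^3$, the bound $d(\mathcal{K})\geq 8$ from extremality, and the inclusion $\mathcal{K}^\perp=\tau(\CC)\subseteq\mathcal{K}$ via the trace map are exactly the standard opening moves (and mirror the shape of the paper's own proof of Proposition~\ref{prop4}). You also correctly isolate the real content as the equality $\dim\mathcal{K}=18$, which is \emph{not} automatic for a fixed-point-free involution on a self-dual code.

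Where your outline drifts is in the route you sketch for that equality. Nebe's result in~\cite{articolo14} is formulated and proved for an arbitrary fixed-point-free involution $\sigma$ on a putative extremal $[72,36,16]$ code; no order-$6$ element enters, and the argument there does not pass through any $\F_2[C_6]$-decomposition $\CC=\CC^{(2)}\oplus\CC^{(3)}$ or through excluding simple $\F_4$-summands of $\EE(g^2)$. Your splitting and the appeal to the Golay fixed subcode for the $\CC^{(2)}$ piece are not incorrect (that fixed subcode is indeed $6$-dimensional --- this is essentially Proposition~\ref{prop4}), but they are orthogonal to how~\cite{articolo14} actually establishes the result, and in the end you still defer the $\CC^{(3)}$ piece to~\cite{articolo14}. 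So the detour buys nothing, and calling the exclusion of $\F_4$-summands ``the heart of~\cite{articolo14}'' mischaracterizes that paper. The honest summary of your argument is: the easy half is done explicitly, and the hard half is~\cite{articolo14} applied to the involution $g^3$ --- which is exactly what the present paper does in one line.
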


With argument similar to the ones used in \cite{articolo20} and
\cite{articolo14} we prove the following result about $\CC(g)$.

\begin{prop}\label{prop4}
The code $\CC(g)$ is $\mathcal{H}\otimes \langle (1,1,1,1,1,1)
\rangle$, where $\mathcal{H}$ is equivalent to $\mathcal{F}$, binary
self-dual $[12,6,4]$ code with generator matrix
$$M=\left[\begin{smallmatrix}
1 & 1 & 1 & 1 & 0 & 0 & 0 & 0 & 0 & 0 & 0 & 0 \\
0 & 0 & 1 & 1 & 1 & 1 & 0 & 0 & 0 & 0 & 0 & 0 \\
0 & 0 & 0 & 0 & 1 & 1 & 1 & 1 & 0 & 0 & 0 & 0 \\
0 & 0 & 0 & 0 & 0 & 0 & 1 & 1 & 1 & 1 & 0 & 0 \\
0 & 0 & 0 & 0 & 0 & 0 & 0 & 0 & 1 & 1 & 1 & 1 \\
0 & 1 & 0 & 1 & 0 & 1 & 0 & 1 & 0 & 1 & 0 & 1
\end{smallmatrix}\right].$$
\end{prop}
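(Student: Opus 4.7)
The plan is to exploit $\gcd(2,3)=1$, which gives $\langle g^2,g^3\rangle=\langle g\rangle$ and hence $\CC(g)=\CC(g^2)\cap\CC(g^3)$. Since Proposition~\ref{prop2} already describes the larger code $\CC(g^2)=\mathcal{G}_{24}\otimes\langle(1,1,1)\rangle$ explicitly, the strategy is to view $g^3$ as acting inside $\CC(g^2)$ and to compute its fixed points there.

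Because $g^3$ commutes with $g^2$ and preserves $\CC$, it preserves $\CC(g^2)$ and permutes the $24$ orbits of $g^2$ on $\{1,\ldots,72\}$, inducing an automorphism $\sigma$ of $\mathcal{G}_{24}$. On each $g$-orbit $\{i,gi,\ldots,g^5i\}$ the two $g^2$-orbits $\{i,g^2i,g^4i\}$ and $\{gi,g^3i,g^5i\}$ are interchanged by $g^3$, so $\sigma$ is a fixed-point-free involution of cycle type $2^{12}$ in $M_{24}=\Aut(\mathcal{G}_{24})$.

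I then invoke the well-known fact that $M_{24}$ has a unique conjugacy class of involutions of cycle type $2^{12}$. Consequently $\sigma$ is determined up to $M_{24}$-conjugacy and $\mathcal{G}_{24}(\sigma)$ is determined up to code equivalence. Since every vector of $\mathcal{G}_{24}(\sigma)$ is constant on the $12$ transpositions of $\sigma$, we can write $\mathcal{G}_{24}(\sigma)=\mathcal{H}\otimes\langle(1,1)\rangle$ for a unique-up-to-equivalence $\mathcal{H}\subseteq\F_2^{12}$; combining this with the $g^2$-tripling yields $\CC(g)=\mathcal{H}\otimes\langle(1,1,1,1,1,1)\rangle$. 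A short rank computation on $\sigma-1$ (equivalently, a decomposition of $\mathcal{G}_{24}$ as an $\F_2\langle\sigma\rangle$-module, where the indecomposables are $\F_2$ and $\F_2[\sigma]$) shows $\dim\mathcal{H}=6$; self-orthogonality is inherited from $\mathcal{G}_{24}$, and the bound $d(\mathcal{H})\geq 4$ follows from halving the Golay weights, which are all at least~$8$.

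To finish, I identify $\mathcal{H}$ with the concrete code $\mathcal{F}$ specified by $M$. By the uniqueness above it is enough to fix one explicit realisation of $\mathcal{G}_{24}$ and one involution $\sigma\in M_{24}$ of cycle type $2^{12}$, compute a basis of $\mathcal{G}_{24}(\sigma)$, read off a generator matrix of $\mathcal{H}$, and exhibit a permutation of $\{1,\ldots,12\}$ that carries $\mathcal{H}$ onto $\mathcal{F}$. The main obstacle is this final coordinate-matching step: it is conceptually routine but requires explicit bookkeeping, and is handled in \textsc{Magma}.
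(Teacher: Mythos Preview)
Your argument is correct in outline but follows a genuinely different route from the paper. The paper does \emph{not} invoke Proposition~\ref{prop2} or any facts about $M_{24}$; instead it works directly with two linear maps $\pi_{12},\phi:\F_2^{72}\to\F_2^{12}$ (projection onto one coordinate per $g$-orbit, and summation over each $g$-orbit), shows that $\phi(\CC)$ is self-orthogonal and that $\pi_{12}(\CC(g))=\phi(\CC)^\perp$, and then uses a short maximality argument together with Pless's classification of self-dual $[12,6,\geq 4]$ codes to conclude $\phi(\CC)=\phi(\CC)^\perp$ is equivalent to $\mathcal{F}$. So the paper's proof is entirely elementary and computer-free, and in particular yields the self-duality of $\mathcal{H}$ as a consequence of the self-duality of $\CC$ rather than as a by-product of an explicit identification. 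Your approach, by contrast, imports the Golay description of $\CC(g^2)$ and the conjugacy-class structure of $M_{24}$; this is more structural and dovetails nicely with the later use of $\CC(g)=\CC(g^2)\cap\CC(g^3)$, but it trades an elementary duality argument for heavier external input and a \textsc{Magma} verification at the end.

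One small caveat: your sentence ``self-orthogonality is inherited from $\mathcal{G}_{24}$'' is not correct as stated. For $u,v\in\mathcal{H}$ the inner product of the doubled vectors in $\F_2^{24}$ is $2(u,v)=0$ regardless of $(u,v)$, so self-orthogonality of $\mathcal{G}_{24}(\sigma)$ in $\F_2^{24}$ gives no information about $(u,v)$ in $\F_2^{12}$. This does not damage your proof, because your final \textsc{Magma} identification of $\mathcal{H}$ with $\mathcal{F}$ makes the intermediate claim redundant; but if you want a computation-free finish you should instead argue self-duality directly, as the paper does, and then quote Pless's uniqueness result.
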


\begin{proof}
Let $\pi_{12}:\F_2^{72}\rightarrow \F_2^{12}$ the projection such
that
$$(v_1,v_2,v_3,v_4,v_5,v_6,\ldots,v_{67},v_{68},v_{69},v_{70},v_{71},v_{72})\mapsto(v_1,\ldots,v_{67})$$
and $\phi:\F_2^{72}\rightarrow \F_2^{12}$ the map
$$(v_1,v_2,v_3,v_4,v_5,v_6,\ldots,v_{67},v_{68},v_{69},v_{70},v_{71},v_{72})\mapsto\left(\sum_{i=1}^{6}v_i,\ldots,\sum_{i=67}^{72}v_{i}\right).$$
The code $\phi(\CC)$ is self-orthogonal: let $c,c'\in\CC$, then
$$(\phi(c),\phi(c'))=\sum_{j=1}^{12}\left(\sum_{i=6j-5}^{6j}c_i\right)\cdot\left(\sum_{i=6j-5}^{6j}c'_i\right)=\sum_{k=0}^{5}(c^{g^k},c')=0.$$
Furthermore we have $\pi_{12}(\CC(g))=\phi(\CC)^\perp$:\\
let
$f=(f_1,f_1,f_1,f_1,f_1,f_1,\ldots,f_{12},f_{12},f_{12},f_{12},f_{12},f_{12})\in\CC(g)$
and $c\in\CC$. Then
$$(\pi_{12}(f),\phi(c))=\sum_{j=1}^{12}f_{j}\cdot\left(\sum_{i=6j-5}^{6j}c_i\right)=(f,c)=0,$$
so $\pi_{12}(\CC(g))\subseteq \phi(\CC)^\perp$.\\
Viceversa, let $v=(v_1,\ldots,v_{12})\in \phi(\CC)^\perp$. Let
$\overline{v}=(v_1,v_1,v_1,v_1,v_1,v_1,\ldots,v_{12},v_{12},v_{12},v_{12},v_{12},v_{12})$
such that $v=\pi_{12}(\overline{v})$. We have
$$0=(v,\phi(c))=(\overline{v},c)$$
for all $c\in\CC=\CC^\perp$, so $\overline{v}\in\CC$.\\
Thus $v\in\pi_{12}(\CC(g))$ and so
$\pi_{12}(\CC(g))=\phi(\CC)^\perp$.\\
$\pi_{12}(\CC(g))$ is the dual of a self-orthogonal code. It is,
obviously, even and of minimum distance bigger or equal to $4$,
since every word in $\CC(g)$ has weight a multiple of $6$. Thus it
is a $[12,\geq 6,\geq 4]$ even code.

We have just proved that
$\phi(\CC)\subseteq\pi_{12}(\CC(g))=\phi(\CC)^\perp$. Let us suppose
$\phi(\CC)\subsetneq\pi_{12}(\CC(g))=\phi(\CC)^\perp$. So there
exists \mbox{$v\in\phi(\CC)^\perp \setminus \phi(\CC)$}. Denote
$D=\langle \phi(\CC),v\rangle$. This is obviously a self-orthogonal
code. If it is not self-dual we can repeat this algorithm. So we can
find a self-dual code $D'$ such that $\phi(\CC)\subset
D'=D'^\perp\subset \phi(\CC)^\perp$. We have that $D'$ has minimum
distance at least $4$. There is, up to equivalence, only one
self-dual $[12,6]$ code with minimum distance bigger or equal to $4$
\cite{articolo17}, and has generator matrix
$$M=\left[\begin{smallmatrix}
1 & 1 & 1 & 1 & 0 & 0 & 0 & 0 & 0 & 0 & 0 & 0 \\
0 & 0 & 1 & 1 & 1 & 1 & 0 & 0 & 0 & 0 & 0 & 0 \\
0 & 0 & 0 & 0 & 1 & 1 & 1 & 1 & 0 & 0 & 0 & 0 \\
0 & 0 & 0 & 0 & 0 & 0 & 1 & 1 & 1 & 1 & 0 & 0 \\
0 & 0 & 0 & 0 & 0 & 0 & 0 & 0 & 1 & 1 & 1 & 1 \\
0 & 1 & 0 & 1 & 0 & 1 & 0 & 1 & 0 & 1 & 0 & 1
\end{smallmatrix}\right].$$
This code has no overcode of minimum distance bigger or equal to
$4$. So $\phi(\CC)=\pi_{12}(\CC(g))=\phi(\CC)^\perp$ and it is
equivalent to a self-dual code with generator matrix equivalent to
$M$. Setting $\mathcal{H}=\pi_{12}(\CC(g))$ we have the thesis.
\end{proof}

Let us underline the following obvious fact:
\begin{equation}\label{inter} \CC(g)=\CC(g^2)\cap
\CC(g^3).\end{equation} This simply observation will be fundamental
in section \ref{sectclass}.

\bigskip

\section{Decomposition of $\CC$ as $\F_2\langle
g\rangle$-module}\label{dec}

From now on set $\mathcal{V}=\F_2^{72}$, $\CC$ our putative binary
self-dual doubly-even $[72,36,16]$ code and $g$ an automorphism of
$\CC$ of order $6$.

Following \cite{articolo3}, we observe that $\mathcal{V}$ is an
$\F_2\langle g\rangle$-module defining the product in the natural
way:
$$v\cdot\left(\sum_{i=0}^5a_ig^i\right)=\sum_{i=0}^5a_i v^{g^i} \qquad \text{for} \ v\in\VV \ \text{and} \ a_i\in\F_2.$$
The code $\CC$ is an $\F_2\langle g\rangle$-submodule of
$\mathcal{V}$, since $g$ is an automorphism of the code. Obviously
any $\F_2\langle g\rangle$-module is also an $\F_2\langle g^2
\rangle$-module.

In the previous section we considered the fixed codes $\CC(g)$,
$\CC(g^2)$ and $\CC(g^3)$. In particular $\CC(g^2)$ plays an
important role in our method, since $g^2$ has order $3$ and we have
a classical result of Huffman about the decomposition of binary
codes with automorphisms of odd order.
\begin{lemma} (cfr. \textnormal{Lemma $2$ in \cite{articolo18}})
Let $\CC$ be a binary linear code with an automorphism $h$ of odd
order. Then
$$\CC=\CC(h)\oplus \EE(h)$$
where $\EE(h)$ is the subcode of $\CC$ of words of even weight on
the orbits of $h$, i.e.
$$\EE(h)=\{c\in \CC \ | \ \wt(c_{|\Omega_i})\equiv 0 \ (\text{mod} \ 2), \ \forall i\}$$
where $\Omega_i$ are the orbits of $h$ on the coordinates.
\end{lemma}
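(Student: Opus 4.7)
The plan is to realize the decomposition as the image/kernel splitting of a suitable idempotent acting on $\CC$. Writing $n=|h|$, the natural candidate in the group algebra $\F_2\langle h\rangle$ is the orbit-sum element $\sigma=1+h+h^2+\cdots+h^{n-1}$. A direct expansion gives $\sigma^2=n\sigma$, and since $n$ is odd we have $n=1$ in $\F_2$, hence $\sigma^2=\sigma$. Multiplication by $\sigma$ is then an $\F_2\langle h\rangle$-endomorphism of $\CC$, and idempotency immediately yields the internal direct sum $\CC=\sigma\CC\oplus\ker(\sigma|_\CC)$.

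Identifying the first summand with $\CC(h)$ is straightforward: the identity $h\sigma=\sigma$ in $\F_2\langle h\rangle$ shows $\sigma\CC\subseteq\CC(h)$, while for $c\in\CC(h)$ one has $\sigma c=nc=c$, giving the reverse inclusion.

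The one step that requires some care is the identification $\ker(\sigma|_\CC)=\EE(h)$. For an orbit $\Omega$ of $h$ of length $m$ (with $m\mid n$) and a coordinate $j\in\Omega$, the sum $(\sigma c)_j=\sum_{i=0}^{n-1}c_{h^{-i}(j)}$ visits each coordinate of $\Omega$ exactly $n/m$ times, so $(\sigma c)_j=(n/m)\,\wt(c_{|\Omega})$ in $\F_2$. Because $n$ is odd, every divisor $m$ of $n$ and the quotient $n/m$ are odd as well, so this reduces to $\wt(c_{|\Omega})\bmod 2$, independently of $j\in\Omega$. Hence $\sigma c=0$ precisely when $\wt(c_{|\Omega})$ is even on every orbit of $h$, i.e.\ when $c\in\EE(h)$. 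The main point to flag is exactly this orbit-length subtlety: the orbits of $h$ need not all have length $n$, so one has to use that every divisor of an odd integer is again odd, not merely the oddness of $n$ itself.
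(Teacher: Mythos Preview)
Your proof is correct and is exactly the idempotent argument the paper has in mind: the paper states this lemma by reference to Huffman and immediately afterwards reinterprets it via Lemma~\ref{Willems} and Remark~\ref{rmk1}, where the central idempotent $f_1=1+g^2+g^4$ plays precisely the role of your $\sigma$ (and $f_2=1+f_1$ projects onto $\EE$). Your careful handling of orbits of length dividing $n$ is a nice touch that the paper leaves implicit.
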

So, in our case,
$$\CC=\CC(g^2)\oplus\EE(g^2).$$
Let us observe that $\dim\EE(g^2)=\dim\CC-\dim\CC(g^2)=36-12=24$.\\

In order to have more information about $\EE(g^2)$, we need to
reinterpret this decomposition in terms of representation theory. A
basic result in this direction is given by Lemma \ref{Willems}
below. We state it following the notations in \cite{articolo3}.

\begin{lemma}\label{Willems}
Let $\CC$ be a binary linear code, $G\leq\Aut(\CC)$ and
$$1=f_1+\ldots+f_t$$
be a decomposition of $1\in\F_2G$ into \emph{central orthogonal
idempotents} $f_i\in \F_2G$.\\ Set $\VV_i=\VV f_i$ and $\CC_i=\CC
f_i\subseteq \VV_i$ for $i\in\{1,\ldots,t\}$. Then
$$\VV=\VV_1 \oplus \ldots \oplus \VV_t \qquad  \text{and} \qquad \CC=\CC_1\oplus\ldots\oplus\CC_t$$
as $\F_2G$-modules.
\end{lemma}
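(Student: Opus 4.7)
The plan is to verify the three properties packed into the statement using only the defining properties of a decomposition of $1$ into central orthogonal idempotents: $f_i \in Z(\F_2 G)$, $f_i^2 = f_i$, and $f_i f_j = 0$ for $i \neq j$. I would first establish that each $\VV_i$ and each $\CC_i$ is an $\F_2 G$-submodule, then show that the respective sums exhaust $\VV$ and $\CC$, and finally that they are direct.

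For the submodule structure, centrality of $f_i$ means that right multiplication by $f_i$ commutes with the $G$-action, so $\VV_i = \VV f_i$ is $G$-stable. Since $G \leq \Aut(\CC)$ makes $\CC$ an $\F_2 G$-submodule of $\VV$, the same argument gives that $\CC_i = \CC f_i$ is an $\F_2 G$-submodule contained in $\VV_i$. For the spanning property, multiplying the relation $1 = f_1 + \ldots + f_t$ on the right by any $v \in \VV$ yields $v = v f_1 + \ldots + v f_t \in \VV_1 + \ldots + \VV_t$, and the identical computation applied to $c \in \CC$ produces $\CC = \CC_1 + \ldots + \CC_t$.

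The only substantive point is directness, which rests on the standard Peirce-type trick. For $v_i \in \VV_i$, writing $v_i = u_i f_i$ and using orthogonality together with idempotency, one obtains $v_i f_j = u_i f_i f_j = \delta_{ij} v_i$. Therefore, applying $f_j$ on the right to any relation $v_1 + \ldots + v_t = 0$ extracts $v_j = 0$, so the sum $\VV_1 + \ldots + \VV_t$ is direct; the same calculation inside $\CC$ gives directness of $\CC_1 + \ldots + \CC_t$. I do not expect any real obstacle here, since this is a standard bookkeeping argument with idempotents; the only thing worth emphasizing is that centrality is precisely what turns the one-sided decomposition of $\VV$ into a decomposition of $\F_2 G$-modules.
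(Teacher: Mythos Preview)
Your argument is correct and is exactly the standard Peirce decomposition argument for a decomposition of $1$ into central orthogonal idempotents. Note, however, that the paper does not actually supply a proof of this lemma: it is stated as a basic result taken from \cite{articolo3}, so there is no ``paper's own proof'' to compare against. Your write-up would serve perfectly well as the missing justification; the only cosmetic point is that the module is a right $\F_2 G$-module, so the phrase ``multiplying the relation $1=f_1+\ldots+f_t$ on the right by $v$'' should be ``acting with the relation on $v$'' (which is what you in fact do when you write $v = vf_1+\ldots+vf_t$).
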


Let us take $G=\langle g\rangle$. Then
$$f_1=1+g^2+g^4 \qquad \text{and} \qquad f_2=g^2+g^4$$
is a decomposition of $1$ into (central) orthogonal idempotents of
$\F_2\langle g\rangle$.

\begin{remark}\label{rmk1}
\textnormal{ Observing the proof of Lemma $2$ in \cite{articolo18},
it is easy to prove the following facts:
\begin{enumerate}
  \item $\CC_1=\CC f_1=\CC(g^2)$ and $\CC_2=\CC f_2=\EE(g^2)$;
  \item $\VV_1=\VV f_1=\VV(g^2)$, the subspace of all the vectors fixed by
  $g^2$;
  \item $\VV_2=\VV f_2$ is the set of vectors of even weight on the orbits of $g^2$.
\end{enumerate}
They are all $\F_2\langle g\rangle$-modules.}
\end{remark}

\begin{remark}\label{oss}
\textnormal{By direct calculations we deduce the following
properties for the principal ideal
$\mathcal{I}=(f_2)\subset\F_2\langle g\rangle$:
\begin{enumerate}
  \item $\mathcal{I}$ is $4$-dimensional, viewed as an $\F_2$-vector
  space.
  \item $\mathcal{I}$ has only one proper subideal, say
  $\mathcal{J}$. This is $2$-dimensional as $\F_2$-vector space. In particular $\mathcal{J}=\mathcal{I}(1+g^3)$ and $\mathcal{J}$ is the set of elements of
  $\mathcal{I}$ fixed by $g^3$, so that $\mathcal{J}(1+g^3)=0$. If we consider $\mathcal{I}$ as an
  $\F_2\langle g \rangle$-module, then obviously
  it is indecomposable and $\mathcal{J}=\soc(\mathcal{I})$.
  \item As an $\F_2\langle g^2\rangle$-module, $\mathcal{I}$
  has exactly $5$ irreducible $\F_2\langle g^2\rangle$-submodules (all $2$-dimensional),
  including $\mathcal{J}$. Calling
  $\mathcal{L}_1,\mathcal{L}_2,\mathcal{L}_3,\mathcal{L}_4$ the
  others, it holds that $\mathcal{L}_i(1+g^3)=\mathcal{J}$, for all
  $i~\in~\{1,\ldots,4\}$. Obviously they have pairwise trivial
  intersection (as they are irreducibles), so that
  $\mathcal{I}=\mathcal{J}\oplus\mathcal{L}_i$, for all $i\in\{1,\ldots,4\}$.
\end{enumerate}
} \end{remark}

\begin{remark}\label{rmk3} \textnormal{
Let $v\in\VV_2=\VV f_2$ and consider $\mathfrak{m}=v\F_2\langle
g\rangle$, the cyclic $\F_2\langle g\rangle$-module generated
by~$v$. An easy consequence of Remark \ref{oss} is that, if
$v\neq0$, then only two possibilities occur:
\begin{enumerate}
    \item[I.] $\mathfrak{m}\cong\mathcal{J}$, so that $\mathfrak{m}$ is an irreducible $\F_2\langle g \rangle$-module of
    dimension $2$.
    \item[II.] $\mathfrak{m}\cong\mathcal{I}$, so that $\mathfrak{m}$ is an indecomposable $\F_2\langle g \rangle$-module of
    dimension $4$.
\end{enumerate}
In the first case $\mathfrak{m}$ will be called cyclic $\F_2\langle
g\rangle$-module of type I, while in the second case $\mathfrak{m}$
will be called cyclic $\F_2\langle g\rangle$-module of type II. They
inherit by $\mathcal{J}$ and $\mathcal{I}$ all the properties stated
in Remark~\ref{oss}. Let us underline that
$\mathfrak{m}=v\F_2\langle g \rangle$ is a cyclic $\F_2\langle
g\rangle$-module of type I if and only if
$v$ is fixed by $g^3$.\\
Since every $v\in\VV_2$ belongs to a cyclic $\F_2\langle
g\rangle$-submodule of $\VV_2$ (the one generated by itself), all
the irreducible $\F_2\langle g\rangle$-submodules of $\VV_2$ are
cyclic of type I. For the same reason all the irreducible
$\F_2\langle
g^2\rangle$-submodules of $\VV_2$ are of dimension $2$.\\
Every $\F_2\langle g\rangle$-submodule of $\VV_2$ has even
dimension: indeed it is also a $\F_2\langle g^2\rangle$-submodule
and so, by Maschke's Theorem, it is the direct sum of irreducible
$\F_2\langle g^2\rangle$ modules, that have dimension $2$. }
\end{remark}

Now, we give a lemma that relates the socle of $\F_2\langle g
\rangle$-submodules of $\VV_2$ to elements fixed by $g^3$.

\begin{lemma}\label{socleandfixed}
Let $\mathcal{M}$ an $\F_2\langle g \rangle$-submodule of $\VV_2$.
Then
$$\soc(\mathcal{M})=\mathcal{M}(g^3)$$
where $\mathcal{M}(g^3)$ is the set of vectors in $\mathcal{M}$
fixed by $g^3$.
\end{lemma}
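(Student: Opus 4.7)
The plan is to prove the two inclusions separately, using the classification of cyclic submodules of $\VV_2$ given in Remark \ref{rmk3}. The key tool is the characterization stated there: a cyclic $\F_2\langle g\rangle$-submodule $v\F_2\langle g\rangle$ of $\VV_2$ is of type I (irreducible, $2$-dimensional) if and only if $v$ is fixed by $g^3$; otherwise it is of type II (indecomposable, $4$-dimensional, with a proper socle $\cong \mathcal{J}$).

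For the inclusion $\soc(\mathcal{M})\subseteq\mathcal{M}(g^3)$, I would pick an arbitrary irreducible $\F_2\langle g\rangle$-submodule $\mathcal{N}\subseteq\mathcal{M}$. Being irreducible, $\mathcal{N}$ is cyclic, generated by any of its nonzero vectors. By Remark \ref{rmk3}, $\mathcal{N}$ is either of type I or of type II; but a type II module is not irreducible (it properly contains a copy of $\mathcal{J}$), so $\mathcal{N}$ must be of type I. Then $\mathcal{N}\cong\mathcal{J}$, and by the properties inherited from $\mathcal{J}$ (see Remark \ref{oss}(2), where $\mathcal{J}$ is exactly the set of elements of $\mathcal{I}$ fixed by $g^3$), every element of $\mathcal{N}$ is fixed by $g^3$. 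Summing over all such irreducible submodules gives $\soc(\mathcal{M})\subseteq\mathcal{M}(g^3)$.

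For the reverse inclusion $\mathcal{M}(g^3)\subseteq\soc(\mathcal{M})$, I would take any nonzero $v\in\mathcal{M}(g^3)$ and look at the cyclic submodule $\mathfrak{m}=v\F_2\langle g\rangle\subseteq\mathcal{M}$. Since $v$ is fixed by $g^3$, Remark \ref{rmk3} forces $\mathfrak{m}$ to be of type I, hence irreducible. Therefore $v\in\mathfrak{m}\subseteq\soc(\mathcal{M})$, proving the inclusion.

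There is no real obstacle here: all the heavy lifting has already been done in the description of the ideal $\mathcal{I}\subset\F_2\langle g\rangle$ in Remark \ref{oss} and in its transfer to cyclic submodules of $\VV_2$ in Remark \ref{rmk3}. The only point to be careful about is to rule out type II modules from being irreducible submodules, which is immediate from the fact that $\mathcal{I}$ has the proper subideal $\mathcal{J}$. The lemma is then essentially a bookkeeping consequence of the $\F_2\langle g\rangle$-module structure already established.
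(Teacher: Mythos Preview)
Your proof is correct and follows essentially the same route as the paper: both directions hinge on the fact from Remark~\ref{rmk3} that a nonzero cyclic $\F_2\langle g\rangle$-submodule of $\VV_2$ is irreducible (type~I) precisely when its generator is fixed by $g^3$. Your write-up is slightly more detailed in ruling out type~II modules from being irreducible, but the argument is the same.
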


\begin{proof}
Since all the irreducible $\F_2\langle g\rangle$-submodules of
$\VV_2$ are cyclic of type I, then all the elements of an
irreducible $\F_2\langle g \rangle$-submodule of $\mathcal{M}$ are
fixed by $g^3$. Viceversa, every element of $\mathcal{M}(g^3)$
belongs necessarily to an irreducible $\F_2\langle g
\rangle$-module, which is therefore contained in $\mathcal{M}$.
Finally, sum of elements fixed by $g^3$ is again fixed by $g^3$.
\end{proof}

\noindent Thus $\soc(\EE(g^2))=(\EE(g^2))(g^3)$. For our purposes it
is important the following lemma, which gives a characterization of
the socle in terms of the subcodes fixed by $g^2$ and $g^3$.

\begin{lemma}\label{Efissati}
With the notations introduced before, it holds
$$\soc(\EE(g^2))=(\EE(g^2))(g^3)=(\CC(g^2)+\CC(g^3))\cap\VV_2.$$
\end{lemma}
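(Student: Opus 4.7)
The first equality $\soc(\EE(g^2))=(\EE(g^2))(g^3)$ is the content of Lemma~\ref{socleandfixed} applied to $\mathcal{M}=\EE(g^2)$, so the real task is to prove
$$(\EE(g^2))(g^3)=(\CC(g^2)+\CC(g^3))\cap\VV_2.$$
The plan is to argue the two inclusions separately, using throughout the decomposition $\VV=\VV_1\oplus\VV_2$ coming from the idempotents $f_1,f_2$ and the fact, already discussed in Remark~\ref{rmk1}, that $\CC(g^2)\subseteq \VV_1$, $\EE(g^2)\subseteq\VV_2$, and indeed $\EE(g^2)=\CC\cap\VV_2$.

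For the inclusion $\subseteq$, take $w\in(\EE(g^2))(g^3)$. Then $w\in\EE(g^2)\subseteq\CC$ and $w^{g^3}=w$, so $w\in\CC(g^3)$. Moreover $w\in\EE(g^2)\subseteq\VV_2$. Hence $w=0+w\in(\CC(g^2)+\CC(g^3))\cap\VV_2$.

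For the inclusion $\supseteq$, take $v\in(\CC(g^2)+\CC(g^3))\cap\VV_2$ and write $v=c_1+c_2$ with $c_1\in\CC(g^2)$ and $c_2\in\CC(g^3)$. First, $v\in\CC$ and $v\in\VV_2$, hence $v\in\CC\cap\VV_2=\EE(g^2)$. It remains to show $v^{g^3}=v$. Since $c_2^{g^3}=c_2$ we have
$$v^{g^3}+v=c_1^{g^3}+c_1.$$
Now $c_1\in\VV_1$, and $\VV_1$ is an $\F_2\langle g\rangle$-submodule, so $c_1^{g^3}+c_1\in\VV_1$. On the other hand $v\in\VV_2$ and $\VV_2$ is also an $\F_2\langle g\rangle$-submodule, so $v^{g^3}+v\in\VV_2$. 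Therefore $c_1^{g^3}+c_1=v^{g^3}+v\in\VV_1\cap\VV_2=\{0\}$, which gives $v^{g^3}=v$ and hence $v\in(\EE(g^2))(g^3)$.

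There is no serious obstacle here: the whole argument is a bookkeeping exercise once one realises that $\VV_1\cap\VV_2=0$ forces $c_1$ to be fixed by $g^3$ as soon as $v=c_1+c_2$ lies in $\VV_2$. The only step that requires a moment of thought is the identification $\CC\cap\VV_2=\EE(g^2)$, which is immediate from $v=vf_1+vf_2$ together with $\CC f_i=\CC_i$ in Lemma~\ref{Willems}.
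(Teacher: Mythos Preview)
Your proof is correct and follows essentially the same approach as the paper: both directions rely on the direct sum $\VV=\VV_1\oplus\VV_2$ and the identification $\EE(g^2)=\CC\cap\VV_2$. The only cosmetic difference is that for the inclusion $\supseteq$ the paper applies the idempotent $f_2$ (using $c_1f_2=0$ so that $v=c_2f_2$ is visibly $g^3$-fixed), whereas you phrase the same fact via $\VV_1\cap\VV_2=\{0\}$; the paper's formulation has the side benefit of yielding the decomposition $\CC(g^2)+\CC(g^3)=\CC(g^2)\oplus\bigl((\CC(g^2)+\CC(g^3))\cap\VV_2\bigr)$, which is used immediately afterwards for the dimension count.
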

\begin{proof}
We have just proved the first equality. Now we prove the second
one.\\
Obviously
$(\EE(g^2))(g^3)=(\EE(g^2))(g^3)\cap\VV_2\subseteq\CC(g^3)\cap\VV_2\subseteq(\CC(g^2)+\CC(g^3))\cap\VV_2$.\\
Viceversa, as we observed in Remark \ref{rmk1}, $\CC(g^2)=\CC f_1$
is an $\F_2\langle g \rangle$-module. Also $\CC(g^3)$ is an
$\F_2\langle g \rangle$-module, since $g$ sends words constant on
the orbits of $g^3$ in words constant on the orbits of $g^3$. So
$\CC(g^2)+\CC(g^3)$ is a $\F_2\langle g \rangle$-module. Thus, by
Lemma \ref{Willems},
$$\CC(g^2)+\CC(g^3)=(\CC(g^2)+\CC(g^3))f_1\oplus(\CC(g^2)+\CC(g^3))f_2.$$
Since
$$\CC(g^2)=\CC(g^2)f_1\subseteq (\CC(g^2)+\CC(g^3))f_1\subseteq \CC f_1=\CC(g^2),$$
we have that
\begin{equation}\label{c2c3scomposizione}
\CC(g^2)+\CC(g^3)=\CC(g^2)\oplus(\CC(g^2)+\CC(g^3))f_2
\end{equation}
and
$(\CC(g^2)+\CC(g^3))f_2=(\CC(g^2)+\CC(g^3))\cap\VV_2\subseteq(\CC\cap\VV_2)=\EE(g^2)$.\\
If $v\in(\CC(g^2)+\CC(g^3))f_2$ then there exist $v_1\in\CC(g^2)$,
$v_2\in\CC(g^3)$ such that $v=(v_1+v_2)f_2$. Obviously $v_1f_2=0$,
so $v=v_2f_2$ is fixed by $g^3$, like $v_2$.
\end{proof}

Let us conclude with a observation that will be crucial in the next
section.

\begin{remark}\textnormal{ We have, by \eqref{inter}, $\dim(\CC(g^2)+\CC(g^3))=12+18-6=24$.\\
So, by \eqref{c2c3scomposizione}, $\dim
((\CC(g^2)+\CC(g^3))\cap\VV_2)= \dim
(\CC(g^2)+\CC(g^3))-\dim(\CC(g^2))=24-12$.\\
Thus, by Lemma \ref{Efissati},
\begin{equation}\label{dimensioni}
\dim(\soc(\EE(g^2)))=12=\frac{\dim(\EE(g^2))}{2}.
\end{equation}
}
\end{remark}

\section{On the structure of the $\F_2\langle g \rangle$-submodules of $\VV_2$}

In this section we will prove a theorem that is a refinement of the
Krull-Schmidt Theorem in a very particular case of $\F_2\langle
g\rangle$-modules. This result gives us a tool to do an exhaustive
search for our code. In this section we strongly use
Remark~\ref{oss} and Remark~\ref{rmk3}.\\
Let us state the theorem.

\begin{teo}\label{decomposition}
Let $\mathcal{M}$ be a $\F_2\langle g \rangle$-submodule of $\VV_2$
such that
$$\dim(\mathcal{M})=2\dim(\soc(\mathcal{M}))=4m.$$
Then for every decomposition
$$\soc(\mathcal{M})=\pp_1\oplus\ldots\oplus\pp_m$$
of the socle in irreducible $\F_2\langle g \rangle$-submodules,
there exist $\qq_1,\ldots,\qq_m$, cyclic $\F_2\langle g
\rangle$-submodules of type~II of $\mathcal{M}$ with
$\soc(\qq_i)=\pp_i$ for all $i\in\{1,\ldots,m\}$, such that
$$\mathcal{M}=\qq_1\oplus\ldots\oplus\qq_m.$$
\end{teo}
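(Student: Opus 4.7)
The central tool is the $\F_2\langle g\rangle$-linear map
$$\mu:\mathcal{M}\longrightarrow\mathcal{M},\qquad w\mapsto w(1+g^3).$$
Since $(1+g^3)^2=1+g^6=0$ in $\F_2\langle g\rangle$, the image of $\mu$ lies in $\mathcal{M}(g^3)$, which by Lemma~\ref{socleandfixed} coincides with $\soc(\mathcal{M})$; the same lemma also identifies $\ker(\mu)=\mathcal{M}(g^3)=\soc(\mathcal{M})$. Rank-nullity then gives $\dim\im(\mu)=4m-2m=2m=\dim\soc(\mathcal{M})$, so $\mu$ surjects onto $\soc(\mathcal{M})$. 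For each $i$ I would pick a nonzero $v_i\in\pp_i$ and, by surjectivity, lift it to some $w_i\in\mathcal{M}$ with $w_i(1+g^3)=v_i$. Because $w_i$ is not fixed by $g^3$, Remark~\ref{rmk3} says $\qq_i:=w_i\F_2\langle g\rangle$ is cyclic of type~II; its two-dimensional irreducible socle $\qq_i(g^3)$ contains $v_i$, hence equals $v_i\F_2\langle g\rangle=\pp_i$.

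The main obstacle is to check that the $\qq_i$'s sum directly to the whole of $\mathcal{M}$. For this I view $\mathcal{M}$ as a module over the quotient ring $S:=\F_2\langle g\rangle/(f_1)$, which acts legitimately because every vector of $\VV_2=\VV f_2$ annihilates $f_1$. Transporting Remark~\ref{oss} along the $\F_2\langle g\rangle$-module isomorphism $S\cong\mathcal{I}$ shows that $S$ is a local commutative ring whose unique maximal ideal $\mathfrak{n}=(1+g^3)S$ satisfies $\mathfrak{n}^2=0$, with $S/\mathfrak{n}\cong\F_4$. The surjectivity of $\mu$ established above is exactly the statement $\mathfrak{n}\mathcal{M}=\soc(\mathcal{M})$, so $\mathcal{M}/\mathfrak{n}\mathcal{M}$ is an $\F_4$-vector space of dimension $m$.

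I would finish by showing that the classes $\overline{w}_1,\ldots,\overline{w}_m$ are $\F_4$-linearly independent in $\mathcal{M}/\mathfrak{n}\mathcal{M}$. Given a relation $\sum_i\lambda_i\overline{w}_i=0$ with $\lambda_i\in\F_4$, lifting each $\lambda_i$ to $s_i\in S$ yields $\sum_iw_is_i\in\mathfrak{n}\mathcal{M}=\soc(\mathcal{M})$; applying $\mu$ (which commutes with right multiplication by $S$) together with $\mathfrak{n}\cdot\soc(\mathcal{M})=0$ gives $\sum_i\lambda_iv_i=0$ inside the direct sum $\pp_1\oplus\cdots\oplus\pp_m$, forcing every $\lambda_i=0$. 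Hence $\{\overline{w}_i\}$ is an $\F_4$-basis of $\mathcal{M}/\mathfrak{n}\mathcal{M}$, and Nakayama's lemma implies that $w_1,\ldots,w_m$ generate $\mathcal{M}$ as an $S$-module. Comparing $\F_2$-dimensions ($\dim_{\F_2}S^m=4m=\dim_{\F_2}\mathcal{M}$) upgrades the surjection $S^m\to\mathcal{M}$, $e_i\mapsto w_i$, to an isomorphism, which is precisely the claimed decomposition $\mathcal{M}=\qq_1\oplus\cdots\oplus\qq_m$.
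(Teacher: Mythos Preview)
Your argument is correct, and it follows a genuinely different route from the paper's. The paper establishes the existence of a type~II cyclic $\qq_i$ with $\soc(\qq_i)=\pp_i$ by an explicit counting argument (Lemma~\ref{indsamesocle}, Corollary~\ref{atmost}, Corollary~\ref{indoverirr}) showing that there are exactly $2^{2m-2}$ such submodules, and then proves directness of $\qq_1+\cdots+\qq_m$ by an induction (Lemma~\ref{indoverdiffsocle}) using the map $m_{1+g^3}$ at each step. You instead use rank--nullity for $\mu$ once on all of $\mathcal{M}$ to get surjectivity onto $\soc(\mathcal{M})$, which immediately furnishes the lifts $w_i$, and then pass to the local quotient ring $S=\F_2\langle g\rangle/(f_1)\cong\mathcal{I}$ with $\mathfrak{n}^2=0$ and $S/\mathfrak{n}\cong\F_4$, so that Nakayama plus a dimension count give the free decomposition $\mathcal{M}\cong S^m$. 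Your approach is more structural and shorter, and it makes transparent why the hypothesis $\dim\mathcal{M}=2\dim\soc(\mathcal{M})$ is exactly what is needed (it forces $\mu$ to be surjective, i.e.\ $\mathfrak{n}\mathcal{M}=\soc(\mathcal{M})$). The paper's approach, on the other hand, is more elementary and yields as a by-product the precise count $|\mathcal{K}_{\pp}|=2^{2m-2}$, which is reused later in Section~\ref{finalalg} to bound the number of equivalence classes in the exhaustive search; your proof does not recover this count directly.
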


Before proving the theorem, we need some lemmas.

\begin{lemma}\label{indsamesocle}
Let $\pp$ an irreducible $\F_2\langle g \rangle$-submodule of type I
of $\VV_2$ and let $\qq_1,\ldots,\qq_n$ be distinct cyclic
$\F_2\langle g \rangle$-submodules of type II of $\VV_2$, all having
$\pp$ as socle, such that
$$\qq_1+\ldots+\qq_n=\pp\oplus\hh_1\oplus\ldots\oplus\hh_n$$
(that is equivalent to ask that $\dim(\qq_1+\ldots+\qq_n)=2+2n$),
where $\qq_i=\pp\oplus\hh_i$ for all $i\in\{1,\ldots,n\}$. Then
$$\dim(\soc(\qq_1+\ldots+\qq_n))=2n.$$
Furthermore, every cyclic $\F_2\langle g \rangle$-submodule of type
II of $\qq_1+\ldots+\qq_n$ has socle $\pp$.
\end{lemma}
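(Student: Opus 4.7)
The plan is to analyze the $\F_2$-linear operator $1+g^3$ acting on $\mathcal{N}:=\qq_1+\ldots+\qq_n$, because Remark \ref{oss} together with Lemma \ref{socleandfixed} describe both its kernel and its image on $\mathcal{N}$ in structural terms. Since $g$ has order $6$, in characteristic $2$ we have $(1+g^3)^2=1+g^6=0$, so this operator is nilpotent of square zero, and Lemma \ref{socleandfixed} identifies its kernel on $\mathcal{N}$ with $\mathcal{N}(g^3)=\soc(\mathcal{N})$.

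For the first assertion I would then compute the image. By additivity,
$$\mathcal{N}(1+g^3)=\sum_{i=1}^{n}\qq_i(1+g^3).$$
Each $\qq_i$ is cyclic of type II, hence isomorphic to the ideal $\mathcal{I}$ from Remark \ref{oss}, in which $\mathcal{I}(1+g^3)=\mathcal{J}=\soc(\mathcal{I})$; transporting the isomorphism one gets $\qq_i(1+g^3)=\soc(\qq_i)=\pp$ for every $i$. Hence $\mathcal{N}(1+g^3)=\pp$, and rank--nullity for $1+g^3$ on $\mathcal{N}$ gives
$$\dim\soc(\mathcal{N})=\dim\mathcal{N}-\dim\mathcal{N}(1+g^3)=(2n+2)-2=2n.$$

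For the second assertion, let $\qq$ be any cyclic $\F_2\langle g\rangle$-submodule of type II contained in $\mathcal{N}$. Applying Remark \ref{oss} inside $\qq$ gives $\soc(\qq)=\qq(1+g^3)$, which is $2$-dimensional. On the other hand, $\qq(1+g^3)\subseteq\mathcal{N}(1+g^3)=\pp$, and $\pp$ is itself $2$-dimensional, so $\soc(\qq)=\pp$.

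The whole argument collapses, without real obstacle, to the single fact from Remark \ref{oss} that on a type II module the operator $1+g^3$ has kernel and image both equal to its socle. The only point to watch is to reason on $\mathcal{N}$ through the operator $1+g^3$ rather than by trying to decompose $\mathcal{N}$ as an $\F_2\langle g\rangle$-module directly, which would pre-empt the main decomposition Theorem \ref{decomposition} that this lemma is meant to support.
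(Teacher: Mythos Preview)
Your proof is correct and follows essentially the same approach as the paper: both compute the image of the operator $1+g^3$ on $\mathcal{N}=\qq_1+\cdots+\qq_n$ to be $\pp$ (you via $\qq_i(1+g^3)=\soc(\qq_i)=\pp$, the paper via $\hh_i(1+g^3)=\pp$ and $\pp(1+g^3)=0$), apply rank--nullity together with Lemma \ref{socleandfixed} to get $\dim\soc(\mathcal{N})=2n$, and then conclude $\soc(\qq)=\qq(1+g^3)\subseteq\pp$ forces equality for any type II $\qq\subseteq\mathcal{N}$.
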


\begin{proof}
Let
$$\begin{array}{rrcl} m_{1+g^3}: & \qq_1+\ldots+ \qq_n & \rightarrow & \qq_1+\ldots+\qq_n\\
                                 & v           & \mapsto     &
                                 v(1+g^3)
                                 \end{array}$$
be the linear map of multiplication by $1+g^3$.\\
Since $\hh_1(1+g^3)=\ldots=\hh_n(1+g^3)=\pp$ and $\pp(1+g^3)=0$, we
have
\begin{equation}\label{im}\text{im}(m_{1+g^3})=\pp.
\end{equation}
This implies that $\ker(m_{1+g^3})$ has
dimension $(2+2n)-2=2n$.\\
The element $v\in\qq_1+\ldots+\qq_n$ is fixed by $g^3$ if and only
if $v\in\ker(m_{1+g^3})$, so that, by Lemma \ref{socleandfixed},
$$\soc(\qq_1+\ldots+\qq_n)=\ker(m_{1+g^3}).$$
Furthermore, as for every $\qq$, cyclic $\F_2\langle g
\rangle$-submodule of type II of $\VV_2$,
$\soc(\qq)=m_{1+g^3}(\qq)$, we get the equality $\soc(\qq)=\pp$ by
\eqref{im}.
\end{proof}

\begin{coro}\label{atmost}
Every $\F_2\langle g \rangle$-submodule $\mathcal{M}$ of $\VV_2$
with
$$\dim(\soc(\mathcal{M}))=2m$$ has at most $2^{2m-2}$ cyclic
$\F_2\langle g \rangle$-submodules of type II with the same socle.
\end{coro}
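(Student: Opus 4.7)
The plan is to bound the cardinality of the set $T_\pp$ of cyclic type II $\F_2\langle g\rangle$-submodules of $\mathcal{M}$ with a fixed socle $\pp$ by counting their generating vectors. The main tool is the $\F_2$-linear map $\psi \colon \mathcal{M} \to \mathcal{M}$ of multiplication by $1+g^3$. Since $(1+g^3)^2 = 1+g^6 = 0$ in $\F_2\langle g\rangle$, we have $\psi^2 = 0$; moreover Lemma \ref{socleandfixed} gives $\ker\psi = \mathcal{M}(g^3) = \soc(\mathcal{M})$, a space of dimension $2m$.

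First I would establish the following correspondence: a vector $v \in \mathcal{M}$ generates a cyclic type II submodule with socle $\pp$ if and only if $v \in \psi^{-1}(\pp) \setminus \soc(\mathcal{M})$. Indeed, such a $v$ is not fixed by $g^3$ and hence generates a type II submodule by Remark \ref{rmk3}; by Remark \ref{oss}(2) this submodule has socle $v\F_2\langle g\rangle(1+g^3) = v(1+g^3)\F_2\langle g\rangle$, which equals $\pp$ exactly when $v(1+g^3)$ is a nonzero element of the $2$-dimensional irreducible $\pp$ (any nonzero element of which generates $\pp$).

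Next, since $\mathcal{I}$ has $\mathcal{J}$ as its unique proper submodule by Remark \ref{oss}(2), every cyclic type II submodule has exactly $|\mathcal{I}| - |\mathcal{J}| = 12$ generators, and distinct submodules share no generator. Hence
$$12\,|T_\pp| \;=\; |\psi^{-1}(\pp)| - |\soc(\mathcal{M})|.$$
It remains to bound $\dim \psi^{-1}(\pp)$. The restriction $\psi|_{\psi^{-1}(\pp)} \colon \psi^{-1}(\pp) \to \pp$ has kernel $\soc(\mathcal{M})$ and image inside the $2$-dimensional space $\pp$, so $\dim\psi^{-1}(\pp) \leq 2m+2$, yielding
$$|T_\pp| \;\leq\; \frac{2^{2m+2} - 2^{2m}}{12} \;=\; \frac{3\cdot 2^{2m}}{12} \;=\; 2^{2m-2}.$$
No step is truly delicate; the one point requiring care is the $12$-to-$1$ correspondence between generators and submodules, which rests on the rigid internal structure of $\mathcal{I}$ recorded in Remark \ref{oss}.
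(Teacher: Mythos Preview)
Your proof is correct. Both your argument and the paper's rest on the same counting idea---elements outside the socle partition into classes of size $12$ according to the type~II submodule they generate---but the paper routes this through Lemma~\ref{indsamesocle}: it forms $\mathcal{N}=\sum_i \qq_i$, invokes that lemma to compute $\dim\soc(\mathcal{N})=2n$ where $\dim\mathcal{N}=2n+2$, obtains the \emph{exact} count $N=2^{2n-2}$, and only then bounds $n\le m$ via $\soc(\mathcal{N})\subseteq\soc(\mathcal{M})$. Your version is more direct: by working with the fibre $\psi^{-1}(\pp)$ inside $\mathcal{M}$ and the rank--nullity bound $\dim\psi^{-1}(\pp)\le 2m+2$, you bypass Lemma~\ref{indsamesocle} entirely and reach the upper bound in one stroke. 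The paper's detour has the mild advantage of yielding an exact formula for $N$ in terms of $\dim\mathcal{N}$, which is then reused in Corollary~\ref{indoverirr}; your approach gives only the inequality, though that is all the present statement requires.
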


\begin{proof}
Let us fix an irreducible $\F_2\langle g^2\rangle$-submodule $\pp$
of $\soc(\mathcal{M})$ and let
$\mathcal{K}_\pp=\{\qq_1,\ldots,\qq_N\}$ be the set of all cyclic
$\F_2\langle g \rangle$-submodules of type II of $\mathcal{M}$ that
have socle $\pp$. If $N\neq0$, let $\mathcal{N}=\qq_1+\ldots+\qq_N$.
We have that $\dim(\mathcal{N})=2+2n$ for a certain integer $n\geq
1$. Obviously there are $\qq'_1,\ldots,\qq'_n\in \mathcal{K}_\pp$
such that
$$\mathcal{N}=\qq'_1+\ldots+\qq'_n.$$
By Lemma \ref{indsamesocle}, $\dim(\soc(\mathcal{N}))=2n$ and every
cyclic $\F_2\langle g \rangle$-submodule of type II of $\mathcal{N}$
has socle $\pp$ (and so it is contained in $\mathcal{K}_\pp$). Since
every element of $\mathcal{N}\setminus\soc(\mathcal{N})$ is
obviously contained in a cyclic $\F_2\langle g \rangle$-submodule of
type II of $\mathcal{N}$ and $\qq_i\cap\qq_j=\pp$ for $i\neq j$,
$i,j\in\{1,\ldots,N\}$, it is easy to observe that
$$N=\frac{|\mathcal{N}|-|\soc(\mathcal{N})|}{|\qq|-|\soc(\qq)|}=\frac{2^{2n+2}-2^{2n}}{2^4-2^2}=2^{2n-2}.$$
Obviously $\soc(\mathcal{N})\subseteq\soc(\mathcal{M})$. Then the
maximum for $N$ is reached when $\dim(\mathcal{N})=2+2m$.
\end{proof}

\begin{coro}\label{indoverirr}
Every $\F_2\langle g \rangle$-submodule $\mathcal{M}$ of $\VV_2$
such that
$$\dim(\mathcal{M})=2\dim(\soc(\mathcal{M}))=4m$$ has exactly
$2^{2m-2}$ cyclic $\F_2\langle g \rangle$-submodules of type II with
socle $\pp$, for every irreducible $\F_2\langle g\rangle$-submodule
$\pp\subseteq\soc(\mathcal{M})$.
\end{coro}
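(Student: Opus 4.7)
The plan is to identify the sum $\mathcal{N}$ of all cyclic type II $\F_2\langle g\rangle$-submodules of $\mathcal{M}$ with socle $\pp$ as the preimage of $\pp$ under multiplication by $1+g^3$, forcing $\dim\mathcal{N}=2+2m$, and then to invoke the counting from the proof of Corollary~\ref{atmost} with $n=m$ to conclude $N=2^{2m-2}$.

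The first step is to analyze the endomorphism $m_{1+g^3}:\mathcal{M}\to\mathcal{M}$, $v\mapsto v(1+g^3)$. By Lemma~\ref{socleandfixed} its kernel is $\soc(\mathcal{M})$, of dimension $2m$. Because $(1+g^3)^2=1+g^6=0$ in $\F_2\langle g\rangle$, the image is contained in $\soc(\mathcal{M})$, and rank-nullity gives it dimension $4m-2m=2m=\dim\soc(\mathcal{M})$. Hence $m_{1+g^3}$ surjects $\mathcal{M}$ onto $\soc(\mathcal{M})$, and for the fixed irreducible $\pp\subseteq\soc(\mathcal{M})$ the preimage $\mathcal{W}:=m_{1+g^3}^{-1}(\pp)$ is an $\F_2\langle g\rangle$-submodule of dimension $2m+2$.

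Next I would show $\mathcal{N}=\mathcal{W}$. The inclusion $\mathcal{N}\subseteq\mathcal{W}$ is immediate since any cyclic type II submodule $\qq$ with $\soc(\qq)=\pp$ satisfies $\qq(1+g^3)=\pp$. For the reverse inclusion, pick any $v\in\mathcal{W}\setminus\soc(\mathcal{M})$: then $v(1+g^3)$ is a nonzero element of $\pp$, so $v$ is not fixed by $g^3$, and by Remark~\ref{rmk3} the cyclic module $v\F_2\langle g\rangle$ is of type II; its socle equals $v(1+g^3)\F_2\langle g\rangle$, which is a nonzero submodule of the irreducible $\pp$ and therefore equals $\pp$. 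Thus $v\in\mathcal{N}$. Finally, any $\pi\in\soc(\mathcal{M})$ is captured by the identity $\pi=v+(v+\pi)$: fixing one $v\in\mathcal{W}\setminus\soc(\mathcal{M})$ as above, the element $v+\pi$ also lies in $\mathcal{W}\setminus\soc(\mathcal{M})$ and hence in $\mathcal{N}$, so $\pi\in\mathcal{N}$ as well. Therefore $\mathcal{N}=\mathcal{W}$ has dimension $2m+2$.

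With $\dim\mathcal{N}=2+2m$, the counting performed at the end of the proof of Corollary~\ref{atmost} applies verbatim with $n=m$, yielding
$$N=\frac{|\mathcal{N}|-|\soc(\mathcal{N})|}{|\qq|-|\soc(\qq)|}=\frac{2^{2m+2}-2^{2m}}{2^4-2^2}=2^{2m-2}.$$
The main obstacle—the only place where the hypothesis $\dim\mathcal{M}=2\dim\soc(\mathcal{M})$ is genuinely used—is showing that \emph{all} of $\soc(\mathcal{M})$ lies in $\mathcal{N}$, not merely the fixed piece $\pp$; the trick $\pi=v+(v+\pi)$, combined with the surjectivity of $m_{1+g^3}$ onto $\soc(\mathcal{M})$ that the dimension hypothesis buys us, resolves this cleanly.
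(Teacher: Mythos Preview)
Your argument is correct, but it follows a genuinely different route from the paper. The paper proceeds by a global averaging argument: it counts the total number $N_{\text{irr}}=(2^{2m}-1)/3$ of irreducible $\F_2\langle g\rangle$-submodules of $\mathcal{M}$ and the total number $N_{\text{ind}}=(2^{4m}-2^{2m})/12$ of cyclic type~II submodules, observes that the average $N_{\text{ind}}/N_{\text{irr}}=2^{2m-2}$ coincides with the upper bound already obtained in Corollary~\ref{atmost}, and concludes that every irreducible $\pp$ must attain this maximum. Your approach is instead local and constructive: you identify the sum $\mathcal{N}$ of all type~II submodules with socle $\pp$ explicitly as the preimage $m_{1+g^3}^{-1}(\pp)$, and the dimension hypothesis enters precisely to force $m_{1+g^3}$ to surject onto $\soc(\mathcal{M})$, giving $\dim\mathcal{N}=2m+2$ directly. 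Your route bypasses the need for Corollary~\ref{atmost} as an a~priori upper bound (you only borrow its final counting formula) and has the added benefit of exhibiting $\mathcal{N}$ concretely; the paper's route is shorter once the previous corollary is in hand. One small point you leave implicit: for the final count you need $|\soc(\mathcal{N})|=2^{2m}$, i.e.\ $\soc(\mathcal{N})=\soc(\mathcal{M})$, but this is immediate since $\soc(\mathcal{N})=\mathcal{N}(g^3)\subseteq\mathcal{M}(g^3)=\soc(\mathcal{M})\subseteq\mathcal{N}$.
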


\begin{proof}
The number of all the irreducible $\F_2\langle g\rangle$-submodules
of $\mathcal{M}$ is equal to
$$N_{\text{irr}}=\frac{|\soc(\mathcal{M})|-1}{2^2-1}=\frac{2^{2m}-1}{2^2-1},$$
since obviously every element of the socle belongs to an irreducible
$\F_2\langle g\rangle$-module and the irreducibles have pairwise
trivial intersection. With similar arguments it is easy to prove
that the number of all the cyclic $\F_2\langle g\rangle$-submodules
of type II of $\mathcal{M}$ is equal to
$$N_{\text{ind}}=\frac{|\mathcal{M}|-|\soc(\mathcal{M})|}{2^4-2^2}=\frac{2^{4m}-2^{2m}}{2^4-2^2}.$$
Then the average of indecomposables for each irreducible is
$$\frac{N_{\text{ind}}}{N_{\text{irr}}}=2^{2m-2},$$
that is also the maximum. So the maximum is reached for every
irreducible $\F_2\langle g\rangle$-submodule.
\end{proof}

\begin{lemma}\label{indoverdiffsocle}
Let $\qq_1,\ldots,\qq_n$ be cyclic $\F_2\langle g
\rangle$-submodules of type II of $\VV_2$ such that
$\pp_1+\ldots+\pp_n$ is a direct sum (i.e.
$\dim(\pp_1\oplus\ldots\oplus\pp_n)=2n$), where $\pp_i=\soc(\qq_i)$
for every $i\in\{1,\ldots,n\}$. Then
\begin{enumerate}
  \item $\qq_1+\ldots+\qq_n$ is a direct sum (i.e. $\dim(\qq_1\oplus\ldots\oplus\qq_n)=4n$),
  \item $\soc(\qq_1+\ldots+\qq_n)=\pp_1\oplus\ldots\oplus\pp_n$.
\end{enumerate}
\end{lemma}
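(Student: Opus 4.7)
My plan is to prove both parts by induction on $n$, using the submodule lattice of type II modules recorded in Remark~\ref{oss} together with the socle characterization provided by Lemma~\ref{socleandfixed}. The base case $n=1$ is immediate since $\qq_1$ is already $4$-dimensional with $\soc(\qq_1)=\pp_1$. For the inductive step I assume both conclusions for $n-1$ modules and set $\mathcal{M}=\qq_1+\ldots+\qq_{n-1}$, so that by induction $\mathcal{M}=\qq_1\oplus\ldots\oplus\qq_{n-1}$ has dimension $4(n-1)$ and $\soc(\mathcal{M})=\pp_1\oplus\ldots\oplus\pp_{n-1}$.

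The crucial step is to show $\mathcal{M}\cap\qq_n=0$. By Remark~\ref{oss} the module $\qq_n\cong\mathcal{I}$ has exactly three $\F_2\langle g\rangle$-submodules, namely $0$, $\pp_n$ and $\qq_n$ itself; in particular every nonzero $\F_2\langle g\rangle$-submodule of $\qq_n$ contains $\pp_n$. Since $\mathcal{M}\cap\qq_n$ is itself such a submodule, if it were nonzero we would have $\pp_n\subseteq\mathcal{M}$, hence $\pp_n\subseteq\soc(\mathcal{M})=\pp_1\oplus\ldots\oplus\pp_{n-1}$. But then $\pp_1+\ldots+\pp_n\subseteq\pp_1\oplus\ldots\oplus\pp_{n-1}$, forcing $\dim(\pp_1+\ldots+\pp_n)\leq 2(n-1)$ and contradicting the hypothesis $\dim(\pp_1\oplus\ldots\oplus\pp_n)=2n$. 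Therefore $\mathcal{M}\cap\qq_n=0$, the sum $\qq_1+\ldots+\qq_n=\mathcal{M}\oplus\qq_n$ is direct, and part (1) follows with total dimension $4(n-1)+4=4n$.

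For part (2), once directness is established I would exploit Lemma~\ref{socleandfixed}: the socle of $\qq_1\oplus\ldots\oplus\qq_n$ is exactly the subspace of vectors fixed by $g^3$. Writing any element uniquely as $v_1+\ldots+v_n$ with $v_i\in\qq_i$, the equality $(v_1+\ldots+v_n)^{g^3}=v_1+\ldots+v_n$ together with the uniqueness of decomposition in the direct sum yields $v_i^{g^3}=v_i$ for each $i$, hence $v_i\in\qq_i(g^3)=\soc(\qq_i)=\pp_i$. Therefore $\soc(\qq_1\oplus\ldots\oplus\qq_n)=\pp_1\oplus\ldots\oplus\pp_n$, proving part (2).

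The only point requiring any care is the appeal to the submodule lattice of $\qq_n$ in the inductive step; once that is granted, the argument is a clean induction and no calculation with the $m_{1+g^3}$ map (as in Lemma~\ref{indsamesocle}) is required, the hypothesis on $\pp_1+\ldots+\pp_n$ being used exclusively to rule out containment of $\pp_n$ in $\soc(\mathcal{M})$.
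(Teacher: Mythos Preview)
Your proof is correct and follows essentially the same inductive strategy as the paper: both arguments reduce part~(1) to showing $\qq_n\cap(\qq_1\oplus\ldots\oplus\qq_{n-1})=0$ via the observation that any nonzero $\F_2\langle g\rangle$-submodule of $\qq_n$ must contain $\pp_n=\soc(\qq_n)$, which would contradict the directness hypothesis on the $\pp_i$.

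The one genuine difference lies in part~(2). The paper argues by computing the kernel of the multiplication map $m_{1+g^3}$ on $\qq_1\oplus\ldots\oplus\qq_n$: since each $\hh_i(1+g^3)=\pp_i$ and each $\pp_i(1+g^3)=0$, the image is $\pp_1\oplus\ldots\oplus\pp_n$, so the kernel (which equals the socle by Lemma~\ref{socleandfixed}) has dimension $2n$ and therefore coincides with $\pp_1\oplus\ldots\oplus\pp_n$. Your argument instead exploits the directness just established: writing $v=\sum v_i$ uniquely with $v_i\in\qq_i$ and using that $g^3$ preserves each $\qq_i$, the fixed-point condition decomposes componentwise. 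This is slightly more elementary and, as you note, bypasses the $m_{1+g^3}$ computation entirely; the paper's route has the minor advantage of reusing the exact mechanism from Lemma~\ref{indsamesocle}, keeping the two lemmas parallel.
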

\begin{proof}
For every $i\in\{1,\ldots,n\}$, let $\hh_i$ irreducible $\F_2\langle
g^2\rangle$-submodule such that $\qq_i=\pp_i\oplus\hh_i$.\\
We make induction on $n\geq 2$.\\
Let $n=2$. Obviously $\qq_1\cap\qq_2=\{0\}$. So
$\qq_1+\qq_2=\pp_1\oplus\hh_1\oplus\pp_2\oplus\hh_2$. Arguing as in
Lemma \ref{indsamesocle}, we can consider the linear map $m_{1+g^3}$
and we can prove
$\soc(\pp_1\oplus\hh_1\oplus\pp_2\oplus\hh_2)=\ker(m_{1+g^3})$ has
dimension $4$. Since
$\pp_1\oplus\pp_2\subseteq\soc(\pp_1\oplus\hh_1\oplus\pp_2\oplus\hh_2)$,
the equality holds.\\
Let us suppose \emph{1.} and \emph{2.} true for $n-1$. Then
$\qq_{n}\cap (\qq_1\oplus\ldots\oplus\qq_{n-1})=\{0\},$ since
$\soc(\qq_n)=\pp_n$ has trivial intersection with
$\soc(\qq_1\oplus\ldots\oplus\qq_{n-1})$. Thus \emph{1.} is true for
$n$. \emph{2.} can be proved as in the basis of the induction.
\end{proof}

We have now all the ingredients to prove the theorem.

{\it Proof of Theorem} \ref{decomposition}. Corollary
\ref{indoverirr} implies that the set $\mathcal{K}_{\pp_i}$ of
cyclic $\F_2\langle g \rangle$-submodules of type II of
$\mathcal{M}$ with socle $\pp_i$ is non-empty for all
$i\in\{1,\ldots,m\}$. Choose $\qq_i\in\mathcal{K}_{\pp_i}$ for every
$i\in\{1,\ldots,m\}$. Then, for Lemma~\ref{indoverdiffsocle},
$\qq_1+\ldots+\qq_m$ ($\subseteq \mathcal{M}$) has dimension $4n$.
So the equality holds. \dne

\begin{remark}\textnormal{
For the following we point out that every choice of
$\qq_i\in\mathcal{K}_{\pp_i}$ ($i\in\{1,\ldots,m\}$) is allowed.}
\end{remark}

Coming back to our problem, we observe that $\EE(g^2)$ satisfies, by
\eqref{dimensioni}, the hypothesis of the theorem:
$$24=\dim(\EE(g^2))=2\cdot\dim( \soc(\EE(g^2)))=4\cdot6.$$

In section \ref{sectclass} we will determine the possible socles of
$\EE(g^2)$ and in section \ref{finalalg}, using Theorem
\ref{decomposition} and the results of section \ref{sectclass}, we
will describe the exhaustive search.

\section{Restrictions on $\soc(\EE(g^2))$}\label{sectclass}

By Lemma \ref{Efissati} we know that
$\soc(\EE(g^2))=(\CC(g^2)+\CC(g^3))\cap\VV_2$. This suggests us to
determine which code can be $\CC(g^2)+\CC(g^3)$ in order to get
$\soc(\EE(g^2)$.

We have seen that we can suppose
$g=(1,2,3,4,5,6)\ldots(67,68,69,70,71,72)$, so that
$$g^2=(1,3,5)(2,4,6)\ldots(67,69,71)(68,70,72) \ \ \text{and} \ \
g^3=(1,4)(2,5)(3,6)\ldots(67,70)(68,71)(69,72).$$ Let us recall the
results and the notations of Proposition \ref{prop2}, \ref{prop3}
and \ref{prop4}: $\CC(g^2)$ is equivalent to
$\mathcal{G}_{24}\otimes \langle(1,1,1)\rangle$, with
$\mathcal{G}_{24}$ extended binary Golay code; $\CC(g^3)$ equivalent
to $\mathcal{K}\otimes \langle(1,1)\rangle$, with $\mathcal{K}$ one
of the $41$ self-dual $[36,18,8]$ codes classified by Mechor and
Gaborit; $\CC(g)=\mathcal{H}\otimes \langle (1,1,1,1,1,1) \rangle$,
with $\mathcal{H}$ equivalent to $\mathcal{F}$, self-dual $[12,6,4]$
code with generator matrix $M$.

Define
$$\begin{array}{rcl} \pi_{24}: \CC(g^2)\subset \F_2^{72} & \rightarrow &
\F_2^{24} \\
  (c_1,c_2,c_1,c_2,c_1,c_2,\ldots,c_{23},c_{24},c_{23},c_{24},c_{23},c_{24}) & \mapsto &
 (c_1,c_2,\ldots,c_{23},c_{24})\end{array}$$
and
$$\begin{array}{rcl} \pi_{36}:\CC(g^3)\subset\F_2^{72} & \rightarrow & \F_2^{36} \\
(c_1,c_2,c_3,c_1,c_2,c_3,\ldots,c_{34},c_{35},c_{36},c_{34},c_{35},c_{36})&\mapsto&(c_1,c_2,c_3\ldots,c_{34},c_{35},c_{36})\end{array}$$
Let us call
$$\overline{g}_{24}=(1,2)\ldots(23,24) \qquad \text{and} \qquad \overline{g}_{36}=(1,2,3)\ldots(34,35,36).$$
It is easy to notice that
$$(\pi_{24}(c))^{\overline{g}_{24}}=\pi_{24}(c^g) \qquad \text{and}
\qquad (\pi_{36}(c))^{\overline{g}_{36}}=\pi_{36}(c^g)$$ for $c$ in
$\CC(g^2)$ and $c$ in $\CC(g^3)$ respectively. So
$$\overline{g}_{24}\in \Aut(\pi_{24}(\CC(g^2))) \qquad \text{and} \qquad \overline{g}_{36}\in \Aut(\pi_{36}(\CC(g^3)))$$
This observation implies that $\pi_{36}(\CC(g^3))$ has an
automorphism of order $3$ and degree $36$. Only $13$ out of the $41$
codes classified by Mechor and Gaborit have such an automorphism.

It is easy to see that
$$\pi_{24}(\CC(g))=(\pi_{24}(\CC(g^2)))(\overline{g}_{24}) \qquad \text{and} \qquad \pi_{36}(\CC(g))=(\pi_{36}(\CC(g^3)))(\overline{g}_{36}).$$
Another important observation is that
$$\pi_{24}(\CC(g))=\mathcal{H}\otimes\langle(1,1)\rangle \qquad \text{and} \qquad \pi_{36}(\CC(g))=\mathcal{H}\otimes\langle(1,1,1)\rangle$$

These remarks together with the lemmas below allow us to deduce the
following result.

\begin{teo}\label{c2c3}
$\CC(g^2)+\CC(g^3)$ belongs, up to equivalence, to a set
$\mathcal{L}$ of cardinality $38$. Every element of $\mathcal{L}$ is
a binary self-orthogonal doubly-even $[72,24,16]$ code.
\end{teo}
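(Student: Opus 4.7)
The plan is to split Theorem \ref{c2c3} into two essentially independent claims: first, that every code of the form $\CC(g^2)+\CC(g^3)$ is automatically a $[72,24,16]$ binary self-orthogonal doubly-even code, and second, that there are exactly $38$ equivalence classes.

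For the structural part, since $\CC(g^2),\CC(g^3)\subseteq\CC$ we have $\CC(g^2)+\CC(g^3)\subseteq\CC=\CC^{\perp}$, which immediately gives self-orthogonality, doubly-evenness and minimum distance at least $16$. The length is obviously $72$. For the dimension I would use the identity \eqref{inter}, i.e.\ $\CC(g)=\CC(g^2)\cap\CC(g^3)$, together with Propositions \ref{prop2}, \ref{prop3} and \ref{prop4}, to get
$$\dim(\CC(g^2)+\CC(g^3))=\dim\CC(g^2)+\dim\CC(g^3)-\dim\CC(g)=12+18-6=24.$$
To rule out a minimum weight strictly greater than $16$, I would exhibit explicit codewords of weight $16$ inside $\CC(g^2)$ (they already exist in $\mathcal{G}_{24}\otimes\langle(1,1,1)\rangle$ because $\mathcal{G}_{24}$ is not a constant-weight code and its weight enumerator forces codewords whose tripled images have weight $16$; concretely, any weight-$8$ codeword of $\mathcal{G}_{24}$ extends, under the coordinate identification already fixed in the paper, to three vectors of weight $16$ in $\CC(g^2)$). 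This shows the minimum distance is exactly $16$.

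For the enumeration, the idea is to ``glue'' $\CC(g^2)$ and $\CC(g^3)$ along their common subcode $\CC(g)$, using the explicit descriptions
$$\CC(g^2)\cong\mathcal{G}_{24}\otimes\langle(1,1,1)\rangle,\qquad \CC(g^3)\cong\mathcal{K}\otimes\langle(1,1)\rangle,\qquad \CC(g)=\mathcal{H}\otimes\langle(1,1,1,1,1,1)\rangle.$$
First I would reduce the list of $41$ Mechor--Gaborit codes to the $13$ candidates for $\mathcal{K}$ admitting an order-$3$ automorphism of cycle type $3$-$(12,0)$, as already observed in the paragraph preceding the theorem (so that $\overline{g}_{36}\in\Aut(\pi_{36}(\CC(g^3)))$ is realizable). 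For each such $\mathcal{K}$, I would then enumerate, up to $\Aut(\mathcal{K})\cap C_{\mathcal{S}_{36}}(\overline{g}_{36})$, all subcodes of $\pi_{36}(\CC(g^3))$ of the form $\mathcal{H}\otimes\langle(1,1,1)\rangle$; symmetrically for $\mathcal{G}_{24}$, I would enumerate the subcodes of the form $\mathcal{H}\otimes\langle(1,1)\rangle$ fixed by $\overline{g}_{24}$, up to the relevant centralizer in $M_{24}$. The coordinate structure imposed by $g$ forces the two copies of $\mathcal{H}$ to coincide when lifted to $\F_2^{72}$ via $\pi_{24}^{-1}$ and $\pi_{36}^{-1}$ respectively, which is exactly the compatibility condition for a valid gluing.

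Each compatible pair determines a code $\CC(g^2)+\CC(g^3)\subseteq\F_2^{72}$ of the required type; conversely, every such code arises this way. The last step is to identify these codes up to the full permutation equivalence that normalizes the $12$-block partition defined by $g$. The main obstacle, and the reason the author resorts to \textsc{Magma}, is precisely this equivalence reduction: the same abstract code can arise from many different gluings (because both $\Aut(\mathcal{G}_{24})=M_{24}$ and the automorphism groups of the $\mathcal{K}$ candidates are large), so one has to quotient carefully by the centralizer of $\langle g\rangle$ in $\mathcal{S}_{72}$ combined with the internal symmetries of $\mathcal{G}_{24}$ and of each $\mathcal{K}$. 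Once this canonical-form computation is carried out, the surviving list has $38$ elements, which is the claimed cardinality of $\mathcal{L}$.
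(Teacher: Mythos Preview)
Your overall strategy matches the paper's: both arguments glue $\CC(g^2)$ and $\CC(g^3)$ along their intersection $\CC(g)$, reduce the $41$ Mechor--Gaborit candidates to the $13$ admitting a fixed-point-free automorphism of order $3$, enumerate the embeddings of $\mathcal{H}$ into each side, and then quotient the resulting list of sums by permutation equivalence via \textsc{Magma}. The paper organizes this as an explicit three-step algorithm (building the orbit $\mathcal{AG}$ of Golay codes over $\mathcal{F}\otimes\langle(1,1)\rangle$, the set $\mathcal{C}36$ on the $\mathcal{K}$ side, and then the pairwise sums), with a supporting lemma and a wreath-product construction to guarantee that the reduction can be realized by a permutation commuting with $g$; your sketch is a correct informal summary of the same procedure.

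There is, however, a concrete error in your a priori argument for the minimum distance. Under the identification $\CC(g^2)\cong\mathcal{G}_{24}\otimes\langle(1,1,1)\rangle$, every weight is multiplied by $3$, so a weight-$8$ Golay word yields a word of weight $24$ in $\CC(g^2)$, not $16$; in fact $\CC(g^2)$ has minimum weight $24$ and contributes no words of weight $16$ at all. The weight-$16$ words actually come from $\CC(g^3)\cong\mathcal{K}\otimes\langle(1,1)\rangle$: since each $\mathcal{K}$ is a $[36,18,8]$ code, its minimum-weight words double to weight $16$. With this correction your structural paragraph is fine. (The paper itself does not argue this a priori; it simply reports the parameters of the $38$ codes as observed in the \textsc{Magma} output.)
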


We describe the algorithm which proves the theorem.

\textbf{Step 1.} Choose a particular extended binary Golay code, say
$\mathcal{G}$, and find all its subcodes equivalent to
$\mathcal{F}\otimes\langle(1,1)\rangle$, say
$\mathcal{M}_1,\ldots,\mathcal{M}_m$.\\
For each $\mathcal{M}_i$ denote with $h_i$ an element of
$\mathcal{S}_{24}$ such that
${\mathcal{M}_i}^{h_i}=\mathcal{F}\otimes\langle(1,1)\rangle$ and
set $\mathcal{G}_i=\mathcal{G}^{h_i}$.\\
Denote $AF=\Aut(\mathcal{F}\otimes\langle(1,1)\rangle)$ and $\mathcal{AG}=\mathcal{G}_1^{AF}$, i.e. the orbit of $\mathcal{G}_1$ under the action of $AF$.\\
Fact: we have $\mathcal{G}_i^{AF}=\mathcal{AG}$ for all
$i\in\{1,\ldots,m\}$ (since, by direct calculations,
$\mathcal{G}_i\in \mathcal{AG}$).

\begin{lemma}
$\mathcal{AG}$ is the set of all extended binary Golay codes that
has $\mathcal{F}\otimes\langle(1,1)\rangle$ as subcode.
\end{lemma}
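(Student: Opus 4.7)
The plan is to prove set equality by showing both inclusions, relying on the transitive equivalence of all extended binary Golay codes and on the fact (stated just before the lemma) that $\mathcal{G}_i^{AF}=\mathcal{AG}$ for every $i$.

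For the inclusion $\mathcal{AG}\subseteq\{\text{Golay codes}\supseteq \mathcal{F}\otimes\langle(1,1)\rangle\}$, I would argue directly. By construction $\mathcal{M}_1\subseteq\mathcal{G}$ and $\mathcal{M}_1^{h_1}=\mathcal{F}\otimes\langle(1,1)\rangle$, so $\mathcal{F}\otimes\langle(1,1)\rangle\subseteq\mathcal{G}^{h_1}=\mathcal{G}_1$. Now pick any $\mathcal{G}'\in\mathcal{AG}=\mathcal{G}_1^{AF}$, i.e.\ $\mathcal{G}'=\mathcal{G}_1^{\tau}$ with $\tau\in AF$. Applying $\tau$ to the inclusion $\mathcal{F}\otimes\langle(1,1)\rangle\subseteq\mathcal{G}_1$ and using $(\mathcal{F}\otimes\langle(1,1)\rangle)^{\tau}=\mathcal{F}\otimes\langle(1,1)\rangle$ (because $\tau\in AF$), I obtain $\mathcal{F}\otimes\langle(1,1)\rangle\subseteq\mathcal{G}'$. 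Since $\mathcal{G}'$ is permutation-equivalent to $\mathcal{G}$, it is itself an extended Golay code.

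For the converse, let $\mathcal{G}'$ be any extended Golay code in $\F_2^{24}$ containing $\mathcal{F}\otimes\langle(1,1)\rangle$. All extended binary Golay codes are equivalent, so there exists $\sigma\in\mathcal{S}_{24}$ with $\mathcal{G}^{\sigma}=\mathcal{G}'$. Pulling the containment back through $\sigma^{-1}$ gives $(\mathcal{F}\otimes\langle(1,1)\rangle)^{\sigma^{-1}}\subseteq\mathcal{G}$; since this pulled-back code is equivalent to $\mathcal{F}\otimes\langle(1,1)\rangle$ and the list $\mathcal{M}_1,\ldots,\mathcal{M}_m$ enumerates \emph{all} such subcodes of $\mathcal{G}$, there is an index $j$ with $\mathcal{M}_j=(\mathcal{F}\otimes\langle(1,1)\rangle)^{\sigma^{-1}}$. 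The associated $h_j$ satisfies $\mathcal{M}_j^{h_j}=\mathcal{F}\otimes\langle(1,1)\rangle$, so $\sigma^{-1}h_j$ fixes $\mathcal{F}\otimes\langle(1,1)\rangle$ setwise, i.e.\ $\tau_0:=\sigma^{-1}h_j\in AF$. Writing $\sigma=h_j\tau_0^{-1}$ yields
$$\mathcal{G}'=\mathcal{G}^{\sigma}=(\mathcal{G}^{h_j})^{\tau_0^{-1}}=\mathcal{G}_j^{\tau_0^{-1}}\in\mathcal{G}_j^{AF}=\mathcal{AG},$$
where the last equality is the fact recorded just before the lemma.

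The main obstacle is the converse direction, and specifically the passage from $(\mathcal{F}\otimes\langle(1,1)\rangle)^{\sigma^{-1}}\subseteq\mathcal{G}$ to the conclusion that this subcode is \emph{exactly} one of the $\mathcal{M}_i$: this relies on the completeness of the enumeration in Step 1, which is the computational content the lemma is really trading on. Everything else is a straightforward manipulation of the coset decomposition $\sigma=h_j\tau_0^{-1}$ with $\tau_0\in AF$, combined with the already-stated orbit-independence of $\mathcal{AG}$ on the chosen index $i$.
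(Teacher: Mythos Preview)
Your proof is correct and follows essentially the same approach as the paper. The only differences are cosmetic: the paper writes down only the non-trivial inclusion (any Golay code containing $\mathcal{F}\otimes\langle(1,1)\rangle$ lies in $\mathcal{AG}$), treating the other direction as obvious, and it orients the equivalence permutation the other way (taking $h$ with $\mathcal{J}^{h}=\mathcal{G}$ rather than your $\sigma$ with $\mathcal{G}^{\sigma}=\mathcal{G}'$); the resulting computation $hh_j\in AF$ and $\mathcal{J}^{hh_j}=\mathcal{G}_j$ is exactly your $\tau_0=\sigma^{-1}h_j\in AF$ argument read backwards.
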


\begin{proof}
Take an extended binary Golay code $\mathcal{J}$ with
$\mathcal{F}\otimes\langle(1,1)\rangle$ as subcode. Then there
exists $h\in\mathcal{S}_{24}$ such that $\mathcal{J}^h=\mathcal{G}$.
There exists $j\in\{1,\ldots,m\}$ such that
$(\mathcal{F}\otimes\langle(1,1)\rangle)^h=\mathcal{M}_j$.\\ Then
$((\mathcal{F}\otimes\langle(1,1)\rangle)^h)^{h_j}={\mathcal{M}_j}^{h_j}=\mathcal{F}\otimes\langle(1,1)\rangle$
and so $hh_j \in AF$. But $\mathcal{J}^{hh_j}=\mathcal{G}_j$ and so
$\mathcal{J}\in \mathcal{G}_j^{AF}=~\mathcal{AG}$.
\end{proof}

\textbf{Step 2.} Take the $13$ codes of the classification of Mechor
and Gaborit which have automorphisms of order $3$ and degree $36$,
say
$\mathcal{D}_1,\ldots,\mathcal{D}_{13}$.\\
Denote
$$\begin{array}{cl}
\{e_{1,1},\ldots,e_{1,n_1}\} & \subset \Aut(\mathcal{D}_1) ,\\
\vdots & \\
\{e_{13,1},\ldots,e_{13,n_{13}}\} & \subset
\Aut(\mathcal{D}_{13})\end{array}$$ the sets of automorphisms of
order $3$ and degree $36$ of
$\mathcal{D}_1,\ldots,\mathcal{D}_{13}$ respectively.\\
Find $h_{i,j}\in\mathcal{S}_{36}$ such that
$h_{i,j}^{-1}e_{i,j}h_{i,j}=\overline{g}_{36}$ and set
$\mathcal{D}_{i,j}=\mathcal{D}_i^{h_{i,j}}$ (so that
$\overline{g}_{36}$ is an automorphism of
$\mathcal{D}_i^{h_{i,j}}$).\\
Find all the subcodes of $\mathcal{D}_{i,j}$ equivalent to
$\mathcal{F}\otimes\langle(1,1,1)\rangle$ and which are fixed word
by word by $\overline{g}_{36}$, say
$\mathcal{N}_{{(i,j)}_1},\ldots,\mathcal{N}_{{(i,j)}_{n_{i,j}}}$.\\
Denote with $l_{(i,j)_k}$ an element of $\mathcal{S}_{36}$ such that
${\mathcal{N}_{(i,j)_k}}^{l_{(i,j)_k}}=\mathcal{F}\otimes\langle(1,1,1)\rangle$
and that commutes with $\overline{g}_{36}$. Let us show its
existence: every element $l\in\mathcal{S}_{36}$ for which
${\mathcal{N}_{(i,j)_k}}^{l}=\mathcal{F}\otimes\langle(1,1,1)\rangle$
is such that $l^{-1}\overline{g}_{36}l$ has order $3$ and fixes
every word of $\mathcal{F}\otimes\langle(1,1,1)\rangle$; thus, by
direct calculations, it has the same orbits (on the coordinates, as
an element of $\mathcal{S}_{36}$) of $\overline{g}_{36}$. So
$l^{-1}\overline{g}_{36}l=\overline{g}_{36}$ or
$l^{-1}\overline{g}_{36}l=\overline{g}_{36}^{-1}$. In the first case
take $l_{(i,j)_k}=l$, in the second case take $l_{(i,j)_k}=ls$ where
$s$ is an automorphism of $\mathcal{F}\otimes\langle(1,1,1)\rangle$
of order $2$ (a relabelling of the coordinates of each orbit of
$\overline{g}_{36}$) such that
$s^{-1}\overline{g}_{36}^{-1}s=\overline{g}_{36}$.\\ Set
$\mathcal{D}_{(i,j)_k}=\mathcal{D}_{i,j}^{l_{(i,j)_{k}}}$ and denote
with $\CC36$ the set of all the codes
$\mathcal{D}_{(i,j)_k}$.\\
\vspace{-3mm}

$\CC36$ is a proper subset of all the codes equivalent to
$\mathcal{D}_1,\ldots,\mathcal{D}_{13}$ which contain
$\mathcal{F}\otimes\langle(1,1,1)\rangle$. The following lemma shows
that $\CC36$ is big enough to allow us to determine all the possible
$\CC(g^2)+\CC(g^3)$.

\begin{lemma}
There exist $\mathcal{B}_3\in\mathcal{AG}$ and
$\mathcal{B}_2\in\mathcal{C}36$ such that $\CC(g^2)+\CC(g^3)$ is
equivalent to
$${\pi_{24}}^{-1}(\mathcal{B}_3)+{\pi_{36}}^{-1}(\mathcal{B}_2).$$
\end{lemma}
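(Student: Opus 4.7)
The plan is to find an element $\tau\in\mathcal{S}_{72}$ commuting with $g$ whose induced actions $\overline{\tau}_{24}\in\mathcal{S}_{24}$ and $\overline{\tau}_{36}\in\mathcal{S}_{36}$ on the orbits of $g^2$ and $g^3$ respectively send $\pi_{24}(\CC(g^2))$ into $\mathcal{AG}$ and $\pi_{36}(\CC(g^3))$ into $\CC36$. Since $\tau$ commutes with $g^2$ and $g^3$, for $k\in\{2,3\}$ one has $\CC(g^k)^\tau=\CC^\tau(g^k)=\pi^{-1}(\pi(\CC(g^k))^{\overline{\tau}})$, so setting $\mathcal{B}_3:=\pi_{24}(\CC(g^2))^{\overline{\tau}_{24}}$ and $\mathcal{B}_2:=\pi_{36}(\CC(g^3))^{\overline{\tau}_{36}}$ yields the desired equality $(\CC(g^2)+\CC(g^3))^\tau=\pi_{24}^{-1}(\mathcal{B}_3)+\pi_{36}^{-1}(\mathcal{B}_2)$. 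As a preliminary normalization, Proposition \ref{prop4} lets one choose $\alpha\in\mathcal{S}_{12}$ with $\mathcal{H}^\alpha=\mathcal{F}$; lifting $\alpha$ to $\widetilde{\alpha}\in\mathcal{S}_{72}$ that permutes the $12$ orbits of $g$ according to $\alpha$ and acts identically within each orbit produces an element centralizing $g$, after which replacing $\CC$ by $\CC^{\widetilde{\alpha}}$ allows one to assume $\CC(g)=\mathcal{F}\otimes\langle(1,1,1,1,1,1)\rangle$, so $\pi_{24}(\CC(g))=\mathcal{F}\otimes\langle(1,1)\rangle$ and $\pi_{36}(\CC(g))=\mathcal{F}\otimes\langle(1,1,1)\rangle$.

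Next, construct $\sigma_3\in\mathcal{S}_{36}$ as the definition of $\CC36$ dictates. By Proposition \ref{prop3}, $\pi_{36}(\CC(g^3))$ is a self-dual $[36,18,8]$ code, and since it admits $\overline{g}_{36}$ as an automorphism of order $3$ and degree $36$, it is equivalent to one of the $13$ codes $\mathcal{D}_i$. Pick $\sigma_1\in\mathcal{S}_{36}$ with $\pi_{36}(\CC(g^3))^{\sigma_1}=\mathcal{D}_i$: then $\sigma_1^{-1}\overline{g}_{36}\sigma_1$ is such an automorphism of $\mathcal{D}_i$, hence equals some $e_{i,j}$, and therefore $\sigma_1 h_{i,j}$ commutes with $\overline{g}_{36}$ and sends $\pi_{36}(\CC(g^3))$ to $\mathcal{D}_{i,j}$. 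Under this map the subcode $\mathcal{F}\otimes\langle(1,1,1)\rangle=\pi_{36}(\CC(g))$ goes to a subcode of $\mathcal{D}_{i,j}$ equivalent to $\mathcal{F}\otimes\langle(1,1,1)\rangle$ and fixed word-by-word by $\overline{g}_{36}$, hence equal to some $\mathcal{N}_{(i,j)_k}$; composing with $l_{(i,j)_k}$ then yields $\sigma_3:=\sigma_1 h_{i,j}l_{(i,j)_k}$, which commutes with $\overline{g}_{36}$, sends $\pi_{36}(\CC(g^3))$ to $\mathcal{D}_{(i,j)_k}\in\CC36$, and stabilizes $\mathcal{F}\otimes\langle(1,1,1)\rangle$ set-wise. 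In particular, the block permutation $\rho\in\mathcal{S}_{12}$ underlying $\sigma_3$ lies in $\Aut(\mathcal{F})$.

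Finally, lift $\sigma_3$ to an element $\tau_2\in\mathcal{S}_{72}$ commuting with $g$. Any element of $\mathcal{S}_{72}$ commuting with $g$ is determined by a permutation of the $12$ length-$6$ cycles together with a rotation in $C_6$ inside each cycle, and its image in the analogous parametrization of the centralizer of $\overline{g}_{36}$ in $\mathcal{S}_{36}$ is obtained by composing each within-cycle rotation with the natural surjection $C_6\twoheadrightarrow C_3$; since this surjection is $2$-to-$1$, every $\sigma_3$ in the image admits a lift $\tau_2$. For any such lift, the induced $\overline{(\tau_2)}_{24}$ permutes the $12$ pairs of $g^2$-orbits by $\rho$ and within each pair acts either as the identity or as the coordinate swap (depending on the parity of the chosen $C_6$-rotation); since $\mathcal{F}\otimes\langle(1,1)\rangle$ consists of pair-constant vectors (so the pair swaps fix it) and $\rho\in\Aut(\mathcal{F})$, one has $(\mathcal{F}\otimes\langle(1,1)\rangle)^{\overline{(\tau_2)}_{24}}=\mathcal{F}\otimes\langle(1,1)\rangle$. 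Therefore $\pi_{24}(\CC(g^2))^{\overline{(\tau_2)}_{24}}$ is still an extended binary Golay code containing $\mathcal{F}\otimes\langle(1,1)\rangle$, which by the preceding lemma places it in $\mathcal{AG}$. Setting $\tau:=\widetilde{\alpha}\tau_2$, $\mathcal{B}_3:=\pi_{24}(\CC(g^2))^{\overline{\tau}_{24}}\in\mathcal{AG}$ and $\mathcal{B}_2:=\pi_{36}(\CC(g^3))^{\overline{\tau}_{36}}=\mathcal{D}_{(i,j)_k}\in\CC36$ concludes the argument.

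The main obstacle is the compatibility in the lifting step: a single $\tau\in\mathcal{S}_{72}$ commuting with $g$ must simultaneously induce $\sigma_3$ on the $g^3$-quotient and something that preserves $\mathcal{F}\otimes\langle(1,1)\rangle$ on the $g^2$-quotient. The two-to-one surjection $C_6\twoheadrightarrow C_3$ provides two choices of within-block rotation for every rotation required by $\sigma_3$, and the key combinatorial observation is that either choice is harmless on the $\pi_{24}$ side, because $\mathcal{F}\otimes\langle(1,1)\rangle$ is automatically invariant under all within-pair swaps; this collapses the ambiguity of the lift and makes the construction go through.
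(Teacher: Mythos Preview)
Your argument is correct and follows essentially the same route as the paper: normalize so that $\CC(g)=\mathcal{F}\otimes\langle(1,1,1,1,1,1)\rangle$, build $\sigma_3\in C_{\mathcal{S}_{36}}(\overline{g}_{36})$ sending $\pi_{36}(\CC(g^3))$ into $\CC36$ while preserving $\mathcal{F}\otimes\langle(1,1,1)\rangle$, and lift it to an element of $C_{\mathcal{S}_{72}}(g)$. The only cosmetic difference is in the lifting step: the paper writes down one explicit lift via the wreath product $\mathcal{S}_2\wr\mathcal{S}_{36}$ (Remark~\ref{wreath}), whereas you argue via the surjection $C_6\wr\mathcal{S}_{12}\twoheadrightarrow C_3\wr\mathcal{S}_{12}$ of centralizers and observe that the ambiguity in the fibre only produces within-pair swaps on the $\pi_{24}$ side, which fix $\mathcal{F}\otimes\langle(1,1)\rangle$; these are two descriptions of the same construction.
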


\begin{proof}
Up to equivalence, we can suppose $$\CC(g^2)\cap
\CC(g^3)=\mathcal{F}\otimes\langle(1,1,1,1,1,1)\rangle.$$ There
exists $i\in\{1,\ldots,13\}$ and $h\in\mathcal{S}_{36}$ such that
$\pi_{36}(\CC(g^3))^h=\mathcal{D}_i$.\\
There exists $j\in\{i,\ldots,n_i\}$ such that
$h^{-1}\overline{g}_{36}h=e_{i,j}$. So
$$\pi_{36}(\CC(g^3))^{hh_{i,j}}=\mathcal{D}_{i,j} \quad \text{and}
\quad
\overline{g}_{36}=h_{i,j}^{-1}e_{i,j}h_{i,j}=h_{i,j}^{-1}h^{-1}\overline{g}_{36}hh_{i,j}.$$
There exists $k\in\{1,\ldots,n_{i,j}\}$ such that
$(\mathcal{F}\otimes\langle(1,1,1)\rangle)^{hh_{i,j}}=\mathcal{N}_{(i,j)_k}$.
So
$$\pi_{36}(\CC(g^3))^{hh_{i,j}l_{(i,j)_k}}=\mathcal{D}_{(i,j)_k}
\quad \text{and} \quad
\overline{g}_{36}=l_{(i,j)_k}^{-1}\overline{g}_{36}l_{(i,j)_k}.$$
Thus, if we set $\overline{t}=hh_{i,j}l_{(i,j)_k}$, we have
\begin{enumerate}
  \item $\pi_{36}(\CC(g^3))^{\overline{t}}=\mathcal{D}_{(i,j)_k}$;
  \item
  $\overline{g}_{36}=\overline{t}^{-1}\overline{g}_{36}\overline{t}$;
  \item
  $\overline{t}\in\Aut(\mathcal{F}\otimes\langle(1,1,1)\rangle)$.
\end{enumerate}
It is now possible to construct an element $t\in\mathcal{S}_{72}$
such that
\begin{enumerate}
  \item $\pi_{36}(c^t)=(\pi_{36}(c))^{\overline{t}}$ for all
  $c\in\CC(g^3)$;
  \item $g=t^{-1}gt$;
  \item $t\in \Aut(\mathcal{F}\otimes\langle(1,1,1,1,1,1)\rangle)$.
\end{enumerate}
The construction of $t$ will be done in Remark \ref{wreath}.
Condition $1.$ implies that $\pi_{36}(\CC(g^3)^t)\in\mathcal{C}36$.\\
Condition $2.$ implies that $\CC(g^2)^t$ is equal to $\CC^t(g^2)$.
Actually, if $c\in\CC(g^2)$, then
$(c^t)^{g^2}=(c^t)^{t^{-1}g^2t}=c^t$, so every word of $\CC(g^2)^t$
is fixed by $g^2$. So $\CC(g^2)^t\subseteq\CC^t(g^2)$. $\CC(g^2)^t$
has obviously dimension $12$. $\CC^t$ is a binary self-dual
doubly-even $[72,36,16]$ code with $g$ as automorphism, so
$\CC^t(g^2)$ has dimension $12$ too, and thus it holds the equality.
This implies
that $\pi_{24}(\CC(g^2)^t)$ is an extended binary Golay code.\\
Condition $3.$ implies that $\pi_{24}(\CC(g^2)^t)$ has
$\mathcal{F}\otimes\langle(1,1)\rangle$ as subcode. Indeed,
$$\mathcal{F}\otimes\langle(1,1)\rangle=\pi_{24}(\mathcal{F}\otimes\langle(1,1,1,1,1,1)\rangle)=\pi_{24}((\mathcal{F}\otimes\langle(1,1,1,1,1,1)\rangle)^t)).$$
Thus $\pi_{24}(\CC(g^2)^t)\in~\mathcal{AG}$.
\end{proof}

\begin{remark}\label{wreath}\textnormal{
It is easy to convince themselves that $t$ exists. For reader's
convenience we give an explicit construction of $t$ through wreath
product.\\
Let $\Delta=\{1,2\}$ and $\Gamma=\{1,2,3\}$. We have
$\mathcal{S}_\Delta=\mathcal{S}_2$ and
$\mathcal{S}_\Gamma=\mathcal{S}_3$. We want to explain how the
wreath product $\mathcal{S}_\Delta\wr\mathcal{S}_\Gamma$ acts on the
coordinates of $\F_2^{6}$.\\
Firstly, we can see
$$\Delta\times\Gamma=\Delta_1\cup\Delta_2\cup\Delta_3$$
with $\Delta_1=\{1,4\}$, $\Delta_2=\{2,5\}$ and $\Delta_3=\{3,6\}$.
This can be send in the ordered set $\Omega=\{1,2,3,4,5,6\}$ in a
natural way, that is sending the first element of $\Delta_1$ in the
first element of $\Omega$, the second element of $\Delta_1$ in the
fourth element of $\Omega$, the first element of $\Delta_2$ in the
second element of $\Omega$ and so on, i.e. by sending $i$ in $i$, in
the ordered set $\Omega$. We denote this map
$$\varphi:\Delta_1\cup\Delta_2\cup\Delta_3\rightarrow\Omega.$$
An element $h$ of $\mathcal{S}_\Delta\wr\mathcal{S}_\Gamma$ has the
shape
$$h=(\delta_1,\delta_2,\delta_3,\gamma)\in\mathcal{S}_\Delta\times\mathcal{S}_\Delta\times\mathcal{S}_\Delta\times\mathcal{S}_\Gamma$$
and acts on $\Delta_1\cup\Delta_2\cup\Delta_3$ in the following way:
$$(\Delta_1\cup\Delta_2\cup\Delta_3)^h=(\Delta_{\gamma^{-1}(1)})^{\delta_{\gamma^{-1}(1)}}\cup(\Delta_{\gamma^{-1}(2)})^{\delta_{\gamma^{-1}(2)}}\cup(\Delta_{\gamma^{-1}(3)})^{\delta_{\gamma^{-1}(3)}}.$$
With this notation it is possible to check that, for example,
$$(1,2,3,4,5,6)=\varphi^{-1}(\text{Id},\text{Id},(1,2),(1,2,3))\varphi.$$
In a similar way, we have that
$$\mathcal{S}_{2}\wr\mathcal{S}_{36}=\mathcal{S}_\Delta\wr\mathcal{S}_{\Gamma_{36}},$$
where $\Gamma_{36}=\{1,\ldots,36\}$, acts on the coordinates of
$\F_2^{72}$, thanks to a suitable
$$\varphi_{36}: \underbrace{\{1,4\}}_{\Delta_1} \cup \underbrace{\{2,5\}}_{\Delta_2} \cup \underbrace{\{3,6\}}_{\Delta_3}
 \cup \ldots \cup \underbrace{\{67,70\}}_{\Delta_{34}} \cup \underbrace{\{68,71\}}_{\Delta_{35}} \cup
 \underbrace{\{69,72\}}_{\Delta_{36}}
\rightarrow \Omega_{72}$$ where $\Omega_{72}=\{1,\ldots,72\}$. With
this notation we have that
$$g=\varphi_{36}^{-1}(\text{Id},\text{Id},(1,2),\ldots,\text{Id},\text{Id},(1,2),\overline{g}_{36})\varphi_{36}.$$
Now, the $t\in\mathcal{S}_{72}$ that we were looking for is
$$t=\varphi_{36}^{-1}(\text{Id},\text{Id},(1,2),\ldots,\text{Id},\text{Id},(1,2),\overline{t})\varphi_{36}.$$
This $t$ has all the desired properties (it is checkable by hand). }
\end{remark}

\textbf{Step 3.} Construct the set of all
${\pi_{24}}^{-1}(\mathcal{B}_3)+{\pi_{36}}^{-1}(\mathcal{B}_2)$,
with $\mathcal{B}_3\in\mathcal{AG}$ and
$\mathcal{B}_2\in\mathcal{C}36$ and take one representant for each
equivalence class of this set. Collect them in the set
$\mathcal{L}$, that satisfies, obviously, the requested properties.

\begin{remark}\label{ultima}
\textnormal{ We have proved that, if $\CC$ does exist, there are
$L\in\mathcal{L}$ and $r\in\mathcal{S}_{72}$ such that
$$(\CC(g^2)+\CC(g^3))^r=L.$$
The problem is that it is not clear, a priori, whether $r^{-1}gr=g$
or not. This is very important, since $\VV_2=\VV(g^2+g^4)$ depends
strongly on $g$.\\
The element $r^{-1}gr$ is an automorphism of $L$ of order $6$ and it
has the same cycle structure of $g$. Furthermore
$L=L(r^{-1}g^2r)+L(r^{-1}g^3r)$. There are not many elements with
these features in $\Aut(L)$, $L\in\mathcal{L}$. Using this fact we
construct a new list $\mathcal{L}'$ (of $40$ elements) with the
following property: there exist $L\in\mathcal{L}'$ and
$s\in\mathcal{S}_{72}$ such that
$$(\CC(g^2)+\CC(g^3))^s=L \qquad \text{and} \qquad s^{-1}gs=g.$$
This guarantees that $\soc(\EE(g^2)))^r=L\cap\VV_2$. }
\end{remark}

\section{Construction of a putative $\EE(g^2)$ from its
socle}\label{finalalg}

Now let us conclude the construction. Following the track laid out
in Theorem \ref{decomposition}, we define, for every $\pp$,
irreducible $\F_2\langle g \rangle$-submodule of $\soc(\EE(g^2))$,
the set
$$\mathcal{H}_\pp=\{\hh \ | \ \hh \ \text{irreducible} \ \F_2\langle
g^2 \rangle\text{-submodules of} \ \VV_2 \ \text{such that} \
\hh(1+g^3)=\pp\}.$$ We will now take a decomposition of
$\soc(\EE(g^2))$ into irreducible $\F_2\langle g \rangle$-submodules
$$\soc(\EE(g^2))=\pp_1\oplus\pp_2\oplus\pp_3\oplus\pp_4\oplus\pp_5\oplus\pp_6$$
and add $6$ irreducibles
$\hh_1\in\mathcal{H}_{\pp_1},\hh_2\in\mathcal{H}_{\pp_2},\hh_3\in\mathcal{H}_{\pp_3},\hh_4\in\mathcal{H}_{\pp_4},\hh_5\in\mathcal{H}_{\pp_5},\hh_6\in\mathcal{H}_{\pp_6}$,
to this decomposition, in all the possible combinations. This is
obviously equivalent to take the direct sum of $\qq_1,\ldots,\qq_6$,
cyclic $\F_2\langle g \rangle$-modules of type II such that
$\soc(\qq_i)=\pp_i$ for all $i\in\{1,\ldots,6\}$, in all the
possible combinations.

Now we do some considerations important for the computational
part.\\
Firstly we determine the cardinalities of $\mathcal{H}_\pp$.

\begin{lemma}\label{cardinalities}
For every irreducible $\F_2\langle g \rangle$-submodule $\pp$ we
have $|\mathcal{H}_\pp|=4^{12}$.
\end{lemma}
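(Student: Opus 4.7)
The strategy is to set up a bijection between $\mathcal{H}_\pp$ and a single fibre of multiplication by $1+g^3$ on $\VV_2$.

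First, I collect the necessary structural information. Since $g$ acts with $12$ orbits of length $6$ on the coordinates, $\VV\cong\F_2\langle g\rangle^{12}$ as an $\F_2\langle g\rangle$-module, so $\VV_2=\VV f_2\cong\mathcal{I}^{12}$; Remark~\ref{oss}(2) then yields $\soc(\VV_2)=\mathcal{J}^{12}$, of $\F_2$-dimension~$24$. (Independently, Lemma~\ref{socleandfixed} gives $\soc(\VV_2)=\VV_2(g^3)$, and the splitting $\VV(g^3)=\VV(g)\oplus(\VV(g^3)\cap\VV_2)$ along $\VV=\VV_1\oplus\VV_2$ returns the same count $36-12=24$.) Moreover $f_1$ annihilates $\VV_2$, so $g^2$ satisfies $x^2+x+1=0$ on $\VV_2$; hence $\F_2[g^2]\cong\F_4$ acts, turning $\VV_2$ into an $\F_4$-vector space in which the irreducible $\F_2\langle g^2\rangle$-submodules are exactly the $\F_4$-lines.

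Next, the map $m_{1+g^3}:\VV_2\to\VV_2$ is $\F_4$-linear because $g^3$ commutes with $g^2$. Its kernel is $\soc(\VV_2)$ by Lemma~\ref{socleandfixed}, and its image lies inside the kernel since $(1+g^3)^2=1+g^6=0$ in characteristic $2$; by rank--nullity the image equals $\soc(\VV_2)$ as well.

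Finally, the count. An $\F_4$-line $\hh\subseteq\VV_2$ belongs to $\mathcal{H}_\pp$ iff $m_{1+g^3}|_\hh$ is an $\F_4$-isomorphism onto $\pp$. Fix a nonzero $\bar v\in\pp$ and take any preimage $v\in m_{1+g^3}^{-1}(\bar v)$: then $v\neq0$ and $v\notin\VV_1$ (because $\VV_1\cap\VV_2=0$), so $\F_4 v$ is a genuine $\F_4$-line mapping onto $\F_4\bar v=\pp$, whence $\F_4 v\in\mathcal{H}_\pp$. Conversely every $\hh\in\mathcal{H}_\pp$ meets $m_{1+g^3}^{-1}(\bar v)$ in exactly one point, and distinct preimages give distinct lines (if $v'=\lambda v$ with $\lambda\in\F_4^\times$, then $\lambda\bar v=\bar v$ forces $\lambda=1$). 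This yields a bijection $m_{1+g^3}^{-1}(\bar v)\leftrightarrow\mathcal{H}_\pp$, so $|\mathcal{H}_\pp|=|\soc(\VV_2)|=2^{24}=4^{12}$. The only step requiring a genuine observation is the $\F_4$-linearity of $m_{1+g^3}$, which collapses what looks like a messy $\F_2$-enumeration into a single coset count.
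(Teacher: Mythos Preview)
Your argument is correct and reaches the same conclusion as the paper, but via a noticeably cleaner route. Both proofs ultimately count the fibre $m_{1+g^3}^{-1}(\bar v)$ over a fixed nonzero $\bar v\in\pp$, since exactly one element of each $\hh\in\mathcal{H}_\pp$ maps to $\bar v$. The paper does this \emph{constructively}: it writes $v$ (your $\bar v$) block by block on the twelve $6$-cycles of $g$, observes that on each block $v$ takes one of four possible patterns, and then lists for each pattern the four admissible preimage blocks, obtaining $4^{12}$ by direct product. You instead invoke the $\F_4$-module structure and rank--nullity on $m_{1+g^3}$ to get $|\ker m_{1+g^3}|=2^{24}$ in one stroke. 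Your approach is more conceptual and avoids the case analysis; on the other hand, the paper's explicit block-by-block parametrisation of $\mathcal{H}_\pp$ is reused immediately afterwards (to check that exactly half of the elements of $\mathcal{H}_\pp$ are doubly-even), so its concreteness is not wasted effort in context.
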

\begin{proof}
We give a constructive proof. As we have already observed in section
\ref{dec}, every irreducible $\F_2\langle g \rangle$-submodule $\pp$
of $\VV_2$ is cyclic and fixed by $g^3$. Furthermore it is also a
cyclic $\F_2\langle g^2 \rangle$-submodule. Say
$$\pp=\{0,v,v^{g},v^{g^2}\} \qquad \text{with} \ v\in\VV_2(g^3).$$
Consider the sets $\Omega_i=\{6i-5,6i-4,6i-3,6i-2,6i-1,6i\}$, which
we call blocks, for $i\in\{1,\ldots,12\}$. It is easy to observe
that $v_{|_{\Omega_i}}$ can be only one of the following
$$\begin{array}{rlcccrl} \text{\textbf{A.}} & [0,0,0,0,0,0] & & & & \text{\textbf{C.}} & [0,1,1,0,1,1] \\
                         \text{\textbf{B.}} & [1,1,0,1,1,0] & & & & \text{\textbf{D.}} & [1,0,1,1,0,1]
                         \end{array}$$
since $v$ is of even weight on the orbits of $g^2$ and fixed by $g^3$.\\
Every $\hh\in\mathcal{H}_\pp$ is cyclic. Thus we can choose its
generator $z$ so that $z(1+g^3)=v$. Since
\mbox{$z_{|_{\Omega_i}}(1+g^3)=v_{|_{\Omega_i}}$}, the following
possibilities for $z_{|_{\Omega_i}}$ occur:
$$z_{|_{\Omega_i}}\in\left\{[0,0,0,0,0,0], [1,1,0,1,1,0], [0,1,1,0,1,1], [1,0,1,1,0,1] \right\}
\ \text{if} \  v_{|_{\Omega_i}} \ \text{is \textbf{A.}},$$
$$z_{|_{\Omega_i}}\in\left\{[1,0,0,0,1,0], [0,1,0,1,0,0], [1,1,1,0,0,1], [0,0,1,1,1,1] \right\}
\ \text{if} \  v_{|_{\Omega_i}} \ \text{is \textbf{B.}},$$
$$z_{|_{\Omega_i}}\in\left\{[0,1,0,0,0,1], [0,0,1,0,1,0], [1,1,1,1,0,0], [1,0,0,1,1,1] \right\}
\ \text{if} \  v_{|_{\Omega_i}} \ \text{is \textbf{C.}},$$
$$z_{|_{\Omega_i}}\in\left\{[1,0,1,0,0,0], [0,0,0,1,0,1], [0,1,1,1,1,0], [1,1,0,0,1,1] \right\}
\ \text{if} \  v_{|_{\Omega_i}} \ \text{is \textbf{D.}}.$$
We have
$4$ choices for every block and the blocks are $12$. So
$|\mathcal{H}_\pp|=4^{12}$.
\end{proof}

So, apparently, we have ${(4^{12})}^6$ calculations to do, a number
that would make unfeasible our search. The point is that two modules
that ``make $\soc(\EE(g^2))$ grow in the same way'' are equal from
our point of view. More precisely, we are interested only in the
representatives of the equivalence classes of the equivalence
relation over $\mathcal{H}_\pp$ defined as following:
$$\hh_1\sim\hh_2 \qquad \text{if and only if} \qquad \soc(\EE(g^2))+\hh_1=\soc(\EE(g^2))+\hh_2$$

\begin{lemma} Each equivalence class is composed by $4096$ elements.
\end{lemma}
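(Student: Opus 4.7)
The plan is to fix a representative $\hh \in \mathcal{H}_\pp$, set $S = \soc(\EE(g^2))$ (which has dimension $12$ by \eqref{dimensioni}), first reformulate the equivalence class of $\hh$ as the set $\{\hh' \in \mathcal{H}_\pp : \hh' \subseteq S + \hh\}$, and then count this set via an affine parametrization given by the multiplication-by-$(1+g^3)$ map.

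For the reformulation, the key geometric input is that $m_{1+g^3}$ sends each $\hh' \in \mathcal{H}_\pp$ isomorphically onto the $2$-dimensional $\pp$ (same dimension, surjective by definition), while $S \subseteq \VV(g^3) = \ker m_{1+g^3}$ by Lemma \ref{socleandfixed}. Hence $\hh' \cap S = 0$ and $\dim(S + \hh') = 14$ for every $\hh' \in \mathcal{H}_\pp$; in particular, $\hh' \subseteq S + \hh$ already forces $S + \hh' = S + \hh$ by a dimension count, so the equivalence class of $\hh$ coincides with $\{\hh' \in \mathcal{H}_\pp : \hh' \subseteq S + \hh\}$. Moreover, the restriction of $m_{1+g^3}$ to the $14$-dimensional space $S + \hh$ has image $\pp$ and kernel exactly $S$.

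For the count, I would fix a nonzero $v \in \pp$. Every $\hh' \in \mathcal{H}_\pp$ has exactly one generator $z'$ with $z'(1+g^3) = v$, because the three non-zero vectors of $\hh'$ map bijectively under $m_{1+g^3}$ onto the three non-zero vectors of $\pp$. Conversely, any $z' \in \VV_2$ with $z'(1+g^3) = v$ is automatically non-zero and generates, as an $\F_2\langle g^2\rangle$-submodule, an element of $\mathcal{H}_\pp$ (the same local calculation that underlies Lemma \ref{cardinalities}). Thus the assignment $\hh' \mapsto z'$ yields a bijection between the equivalence class of $\hh$ and the fiber $\{z' \in S + \hh : z'(1+g^3) = v\}$, which is a coset of $\ker(m_{1+g^3}|_{S+\hh}) = S$ and therefore has cardinality $|S| = 2^{12} = 4096$. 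No real obstacle is expected; the proof is essentially linear algebra once Lemma \ref{socleandfixed} identifies $S$ as the $g^3$-fixed part of $\EE(g^2)$.
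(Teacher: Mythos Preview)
Your argument is correct, and it follows a somewhat more direct route than the paper's. The paper proceeds by invoking the structure theory developed earlier: writing $S+\hh$ as $S+\qq$ for the type~II module $\qq=\pp\oplus\hh$, it applies the analogues of Lemma~\ref{indsamesocle} and Corollaries~\ref{atmost}--\ref{indoverirr} to see that every cyclic type~II submodule of $S+\hh$ has socle $\pp$ and that there are exactly $2^{2\cdot 6-2}=1024$ of them; since each such type~II module contains exactly four $\F_2\langle g^2\rangle$-complements of $\pp$ (the $\mathcal{L}_i$ of Remark~\ref{oss}) and distinct type~II modules meet only in $\pp$, one obtains $4\cdot 1024=4096$ elements of $\mathcal{H}_\pp$ inside $S+\hh$. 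Your approach bypasses this indecomposable count entirely: once you know $\ker(m_{1+g^3}|_{S+\hh})=S$, the fiber over a fixed $v\in\pp$ is a single coset of $S$, and the bijection with the equivalence class is immediate. The paper's route has the advantage of reusing the Section~\ref{dec}--\ref{sectclass} machinery already in place, while yours is more self-contained and makes the linear-algebraic content of the count transparent.
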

\begin{proof}
Let us fix $\hh\in\mathcal{H}_\pp$, for
$\pp\subseteq\soc(\EE(g^2))$. With arguments similar to the ones
used in Lemma \ref{indsamesocle}, Corollary \ref{atmost} and
Corollary \ref{indoverirr}, we can prove that all the indecomposable
$\F_2\langle g\rangle$-modules in $\soc(\EE(g^2))+\hh$ have socle
$\pp$ and they are $2^{2\cdot6-2}=1024$. Since every indecomposable
$\F_2\langle g\rangle$-module, as we have observed in Section
\ref{dec}, contains $4$ elements of $\mathcal{H}_\pp$, and the
indecomposable $\F_2\langle g\rangle$-modules have pairwise
intersection $\pp$, then there are exactly $4\cdot1024=4096$
elements of $\mathcal{H}_\pp$ in $\soc(\EE(g^2))+\hh$. The thesis
follows easily.
\end{proof}

Thus the number of classes is $4^{12}/4096=4096$, a more practical
number to do calculations. Unfortunately $4096^6$ is still too big.
However, we are interested only in the representatives that give us
a doubly-even module. It is easy to prove (thanks to the
construction in Lemma \ref{cardinalities}) that exactly the half of
the elements of $\mathcal{H}_\pp$ are doubly-even. Moreover, if
$\soc(\EE(g^2))+\hh$ is doubly-even then all the elements of the
class of $\hh$ are doubly-even. It happens that the number of
doubly-even representatives is at most $4^{11}/4096=2048$.

We are to explain our algorithm (we will use the notation
$\text{d}(\CC)$ to indicate the minimum distance of a code $\CC$):

\noindent \textbf{Step 1.} Take the set
$\mathcal{L}'=\{L_1,\ldots,L_{40}\}$ and set $F_i=L_i\cap\VV_2$.

Do the following steps for every $i\in\{1,\ldots,40\}$.

\noindent \textbf{Step 2.} Find $6$ irreducible $\F_2\langle
g\rangle$-modules, say
$$P_{i,1},\ldots,P_{i,6}$$
such that $F_i=P_{i,1}\oplus\ldots\oplus P_{i,6}$.

\noindent \textbf{Step 3.} For every $P_{i,j}$,
$j\in\{1,\ldots,6\}$, find a set of representatives of the $2048$
equivalence classes of $\mathcal{H}_{P_{i,j}}$ consisting of
doubly-even modules. Then find the subset of $\mathcal{H}_{i,j}$
defined in the following way
$$\begin{array}{rl} \mathcal{H}'_{i,j} & =\{H\in\mathcal{H}_{i,j} \ | \ \text{d}(H+L_i)\geq 16\}\\
                                       & =\{H_{{(i,j)}_1},\ldots,H_{{(i,j)}_{n_{i,j}}}\}\end{array}$$

\noindent \textbf{Step 4.} Find the subset of
$\mathcal{H}'_{i,1}\times\mathcal{H}'_{i,2}$ defined in the
following way
$$ \mathcal{P}_i    =\{(H_1,H_2) \ | \ \text{d}(L_i+H_1+H_2)\geq 16\}.$$

\noindent Find the subset of $\mathcal{P}_i\times\mathcal{H}'_{i,3}$
defined in the following way
$$\mathcal{T}_i    =\{(P,H) \ | \ \text{d}(L_i+P_1+P_2+H)\geq 16\}.$$

\noindent Find the subset of $\mathcal{T}_i\times\mathcal{H}'_{i,4}$
defined in the following way
$$ \mathcal{Q}_i    =\{(T,H) \ | \ \text{d}(L_i+T_1+T_2+T_3+H)\geq 16\}.$$

\noindent Find the subset of $\mathcal{Q}_i\times\mathcal{H}'_{i,5}$
defined in the following way
$$\mathcal{F}_i    =\{(Q,H) \ | \ \text{d}(L_i+Q_1+\ldots+Q_4+H)\geq 16\}.$$

\noindent Find the subset of $\mathcal{F}_i\times\mathcal{H}'_{i,6}$
defined in the following way
$$\mathcal{S}_i    =\{(F,H) \ | \ \text{d}(L_i+F_1+\ldots+F_5+H)\geq 16\}.$$

\section{Conclusions}

Theorems \ref{c2c3} tells us that, if a binary self-dual doubly-even
$[72,36,16]$ code with automorphism of order $6$ exists, then it has
a subcode equivalent to one of the $38$ codes in $\mathcal{L}$.
Remark \ref{ultima} and Theorem~\ref{decomposition} imply that the
eventual code can be found in the sets
$$\left\{L_i+S_1+\ldots+S_6 \ | S\in\mathcal{S}_i\right\}_{i\in\{1,\ldots,40\}}$$
where $\mathcal{L}'=\{L_1,\ldots,L_{40}\}$.\\
\textsc{Magma} calculations find $\mathcal{S}_i$ empty, for all
$i\in\{1,\ldots,40\}$. So a binary self-dual doubly-even
$[72,36,16]$ code with automorphism of order $6$ does not exist.

\bigskip

\section*{Acknowledgements}

The author would like to thank his advisors, Prof. Francesca Dalla
Volta and Prof. Massimiliano Sala, for the insightful suggestion of
the problem and for the long and fruitful discussions.\\
\emph{Laboratorio di Matematica Industriale e Crittografia}
(CryptoLab) of \emph{Università degli Studi di Trento} deserves
thanks for the great help in the computational part.

\bigskip

\end{document}